\let\NAT@parse\undefined
\newtheorem{theorem}{Theorem}
\newtheorem{lemma}{Lemma}
\newtheorem{proposition}{Proposition}
\newtheorem{corollary}{Corollary}
\newtheorem{definition}{Definition}
\newcommand{\mc}[1]{\mathcal{#1}}
\newcommand{\mb}[1]{\mathbf{#1}}
\newcommand{\id}{\mathrm{id}} 
\begin{document}

\title{Symmetry-Enforced Quadratic Degradability Beyond Low Dimensions}




\author{%
    Yun-Feng~Lo,~\IEEEmembership{Student Member,~IEEE},
    Yen-Chi~Lee,~\IEEEmembership{Member,~IEEE},
    and Min-Hsiu~Hsieh,~\IEEEmembership{Senior Member,~IEEE}%
    \thanks{Y.-F. Lo and Y.-C. Lee contributed equally to this work.}%
    \thanks{This work was supported by the National Science and Technology Council of Taiwan under Grant NSTC 113-2115-M-008-013-MY3. (Corresponding author: Yen-Chi Lee.)}%
    \thanks{Y.-F. Lo is with the Department of Electrical and Computer Engineering, Georgia Institute of Technology, Atlanta, GA, USA (e-mail: yun-feng.lo@gatech.edu).}%
    \thanks{Y.-C. Lee is with the Department of Mathematics, National Central University, Taoyuan, Taiwan (e-mail: yclee@math.ncu.edu.tw).}%
    \thanks{M.-H. Hsieh is with the Quantum Computing Research Center, Hon Hai Research Institute, Taipei, Taiwan (e-mail: min-hsiu.hsieh@foxconn.com).}%
}

\maketitle


\begin{abstract}
Approximate degradability provides a powerful framework for bounding the quantum and private capacities of noisy quantum channels in regimes where exact degradability fails. While generic low-noise channels exhibit a non-degradability parameter that decays as a fractional power of the noise strength, certain symmetric channels are known to display an enhanced quadratic suppression. In this work, we investigate the structural origin of this phenomenon through a family of high-dimensional, rotationally symmetric noise models constructed from angular momentum operators. We first establish that the pure noise component of these channels is maximally distinguishable from the identity channel in diamond norm, revealing a geometric orthogonality between signal and noise. Building on this structure, we construct an explicit symmetric degrading map and prove that the approximate degradability parameter scales quadratically with the noise parameter for all system dimensions. To clarify the mechanism behind this behavior, we identify algebraic conditions on the noise operators that guarantee the cancellation of leading-order non-degradability terms. These conditions apply not only to the rotationally symmetric model studied here, but also to a distinct family of high-dimensional depolarizing channels based on discrete unitary operator bases. Numerical evaluations of capacity lower bounds further illustrate the practical impact of the quadratic suppression. Together, these results demonstrate that enhanced approximate degradability arises from symmetry-induced orthogonality and invariance properties, rather than from low-dimensional or model-specific effects.
\end{abstract}

\begin{IEEEkeywords}
Quantum channels, approximate degradability, quantum capacity, symmetric noise, diamond norm, depolarizing channels.
\end{IEEEkeywords}




\section{Introduction}

\IEEEPARstart{E}{valuating} the quantum capacity of noisy channels remains one of the central challenges in quantum information theory
\cite{bennett1998quantum, wilde2017quantum, holevo2012quantum}. 
Unlike the classical capacity, the quantum capacity involves a regularized optimization of the coherent information, which is generally super-additive
\cite{divincenzo1998quantum, cubitt2015unbounded, leditzky2018dephrasure}.
As a consequence, exact capacity formulas are known only for restricted families of channels. 
A prominent example is the class of degradable channels \cite{devetak2005capacity, cubitt2008structure}, 
where coherent information becomes additive and the capacity reduces to a single-letter expression.  
However, exact degradability is a stringent condition and is rarely satisfied by physically relevant noise models.

To address this limitation, Sutter et al.~\cite{sutter2017approximate} introduced the notion of \(\epsilon\)-approximately degradable channels, 
in which the degradability requirement is relaxed by allowing a small diamond-norm deviation from an exactly degradable map. 
This framework yields quantitative bounds on the super-additivity of coherent information and provides useful capacity estimates in low-noise regimes.

A key development in this direction was the work of Leditzky et al.~\cite{leditzky2018prl}, 
which analyzed the qubit depolarizing channel and showed that its approximate degradability parameter satisfies the scaling 
\(\eta \sim O(p^2)\), in contrast with the generic \(O(p^{3/2})\) behavior expected from non-symmetric perturbations.  
While this quadratic suppression was established explicitly in the qubit setting, its extension to higher-dimensional symmetric channels has remained largely unexplored at a structural level.
This enhanced scaling allows significantly tighter capacity bounds and raises a natural question:
\emph{Does the quadratic behavior arise from a special feature of qubit systems, or does it reflect a more general structural property of symmetric noise?}

In this work, we investigate this question through the lens of high-dimensional generalizations of the depolarizing channel.
Our primary focus is the family of modified Landau--Streater (MLS) channels, 
constructed from spin-\(j\) representations of the angular momentum operators \(J_x, J_y, J_z\)
\cite{landau1993birkhoff, filippov2019quantum}.
These channels provide a natural \(SU(2)\)-covariant extension of the qubit depolarizing map to arbitrary dimension \(d = 2j+1\).
Related high-dimensional channels have been considered recently; for instance, 
Karimipour \cite{karimipour2024noisy} studied generalizations based on \(SO(d)\). 
Here, we focus on the \(SU(2)\)-symmetric setting both for its physical relevance and its well-understood representation-theoretic structure.

\medskip
\noindent\textbf{Summary of contributions.}
The main contributions of this paper are as follows:

\begin{itemize}
    \item \textbf{Geometric characterization of the noise structure.}
    We show that the diamond-norm distance between the pure Landau--Streater noise \(\mathcal{L}_j\) and the identity channel is exactly~2 (Theorem~\ref{thm:diamond_distance}).  
    This provides a precise characterization of how the noise redistributes information into orthogonal subspaces.

    \item \textbf{Quadratic approximate degradability in arbitrary dimension.}
    For the MLS channel, we prove that the approximate degradability parameter satisfies 
    \(\eta \sim O(p^2)\) for all spins \(j\) (Theorem~\ref{thm:main}).  
    The proof relies on an explicit symmetric degrading map constructed using the structure of the \(SU(2)\) Casimir operator.

    \item \textbf{A structural sufficient condition.}
    We extract a set of common algebraic features underlying this behavior and formulate a general sufficient condition 
    (Proposition~\ref{prop:rigorous_condition}) under which a quantum channel exhibits quadratic suppression.  
    This condition, based on the tracelessness and normal eigen-invariance of the noise operators, 
    applies to both the MLS family and to the generalized Pauli channel (GPC), thereby placing these two physically distinct models within a common analytical framework.
\end{itemize}

These results indicate that the observed \(O(p^2)\) scaling is closely tied to an algebraic structure that prevents first-order interference between the identity component and the noise operators.  
Channels whose error operators satisfy suitable orthogonality and invariance properties appear to retain this enhanced behavior, suggesting that such scaling is characteristic of a broader class of symmetric or “unbiased’’ noise models.

\medskip
The remainder of the paper is organized as follows.  
Section~\ref{sec:prelim} introduces notation and defines the MLS channel.  
Section~\ref{sec:geometry} presents the geometric analysis of the noise structure.  
Section~\ref{sec:main_result} contains the main degradability theorem and its proof.  
Section~\ref{sec:numerics} provides numerical capacity bounds, and 
Section~\ref{sec:mechanism} discusses the structural interpretation of the results.  
We conclude in Section~\ref{sec:conclusion}.  
Technical proofs and the analysis of the GPC are provided in the Appendices.

\section{Preliminaries and Channel Model} \label{sec:prelim}

In this section, we fix notation and recall the distance measures used for quantum channels. We then introduce the high-spin model and the notion of approximate degradability, culminating in the definition of the MLS channel.

\subsection{Notation and Norms}
Let $\mathcal{H}$ be a finite-dimensional Hilbert space, and let $\mathcal{B}(\mathcal{H})$ denote the space of bounded linear operators acting on $\mathcal{H}$. We denote the dimension of $\mathcal{H}$ by $d$. The set of quantum states (density operators) is denoted by $\mathcal{S}(\mathcal{H}) := \{ \rho \in \mathcal{B}(\mathcal{H}) : \rho \geq 0, \tr(\rho) = 1 \}$.

For an operator $M \in \mathcal{B}(\mathcal{H})$, the trace norm is defined as $\|M\|_{1} :=\tr \sqrt{M^{\dagger} M}$. The max norm (entry-wise infinity norm) with respect to a fixed basis is defined as $\|M\|_{\max } :=\max _{i, j}|M_{i, j}|$. Note that while the trace norm is basis-independent, the max norm depends on the chosen basis.

Quantum channels are modeled as completely positive and trace-preserving (CPTP) linear maps $T: \mathcal{B}(\mathcal{H}_A) \rightarrow \mathcal{B}(\mathcal{H}_B)$. To quantify the distinguishability of two quantum channels, we employ the diamond norm (or completely bounded trace norm), defined as (see \cite{khatri2020principles})
\begin{equation}
  \|T\|_{\diamond}:=\sup _{k \in \mathbb{N}} \max _{M \in \mathcal{B}(\mathbb{C}^{k} \otimes \mathcal{H}_A), \|M\|_1=1} \|(\id_{k} \otimes T)(M)\|_{1},
\end{equation}
where $\id_k$ denotes the identity map on an auxiliary system of dimension $k$. It is a known fact that it suffices to take $k = \dim(\mathcal{H}_A)$.

Calculating the diamond norm can be formulated as a semidefinite program (SDP). However, for analytical bounds, the following lemma relating the diamond norm to the max norm of the Choi matrix is particularly useful. Let $\{\ket{i}\}_i$ be the computational basis of $\mathcal{H}_A$, and let $T$ be a superoperator. The Choi matrix of $T$ is defined as $J(T) := \sum_{i,j} \ket{i}\bra{j} \otimes T(\ket{i}\bra{j})$.

\begin{lemma}[From \cite{leditzky2018prl}] \label{lemma:diamond_bound}
For a linear map $T: \mathcal{B}(\mathcal{H}_A) \rightarrow \mathcal{B}(\mathcal{H}_B)$, the diamond norm is bounded by
\begin{equation}
  \|T\|_{\diamond} \leq (\dim \mathcal{H}_A) (\dim \mathcal{H}_B)^{2} \|J(T)\|_{\max },
\end{equation}
where $\|\cdot\|_{\max}$ denotes the entrywise maximum absolute value
with respect to the product basis $\{|i\rangle\otimes|k\rangle\}_{i,k}$.
\end{lemma}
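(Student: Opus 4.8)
The plan is to reduce the diamond norm to a supremum over rank-one inputs, vectorize the action of $\Phi$ through its Choi matrix, and close the estimate with the noncommutative H\"older inequality. Throughout write $d_A := \dim\mathcal{H}_A$ and $d_B := \dim\mathcal{H}_B$.

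First I would fix the auxiliary dimension $k = d_A$, which is permitted by the remark preceding the lemma, and note that $M \mapsto \|(\id_{d_A}\otimes\Phi)(M)\|_1$ is convex while the trace-norm unit ball $\{M : \|M\|_1 \le 1\}$ is the closed convex hull of the rank-one operators $\ket{\psi}\bra{\phi}$ with $\|\psi\| = \|\phi\| = 1$. Hence the supremum defining $\|\Phi\|_\diamond$ is attained on such operators, and it suffices to bound $\|(\id_{d_A}\otimes\Phi)(\ket{\psi}\bra{\phi})\|_1$ for unit vectors $\ket{\psi},\ket{\phi}\in\mathbb{C}^{d_A}\otimes\mathcal{H}_A$.

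Next comes the vectorization identity. Writing $\ket{\Omega} = \sum_i\ket{i}\otimes\ket{i}$ for the unnormalized maximally entangled vector on $\mathcal{H}_A\otimes\mathcal{H}_A$, every unit vector on $\mathbb{C}^{d_A}\otimes\mathcal{H}_A$ can be written $\ket{\psi} = (M_\psi\otimes\Id_A)\ket{\Omega}$ for an operator $M_\psi$ with Frobenius norm $\|M_\psi\|_2 = \|\psi\| = 1$, and likewise $\ket{\phi} = (M_\phi\otimes\Id_A)\ket{\Omega}$. Since $M_\psi$ and $M_\phi$ act only on the auxiliary factor, which $\Phi$ does not touch, they pull through $\id_{d_A}\otimes\Phi$, giving
\[ (\id_{d_A}\otimes\Phi)(\ket{\psi}\bra{\phi}) = (M_\psi\otimes\Id_B)\,J(\Phi)\,(M_\phi^\dagger\otimes\Id_B). \]
Applying the generalized H\"older inequality for Schatten norms with exponents $(2,\infty,2)$,
\[ \big\|(M_\psi\otimes\Id_B)\,J(\Phi)\,(M_\phi^\dagger\otimes\Id_B)\big\|_1 \le \|M_\psi\otimes\Id_B\|_2\,\|J(\Phi)\|_\infty\,\|M_\phi^\dagger\otimes\Id_B\|_2 . \]
The outer factors equal $\|M_\psi\|_2\,\|\Id_B\|_2 = \sqrt{d_B}$ each, and $\|J(\Phi)\|_\infty \le d_A d_B\,\|J(\Phi)\|_{\max}$ because $J(\Phi)$ is a $(d_A d_B)\times(d_A d_B)$ matrix and the operator norm of any matrix is at most its dimension times its max-norm. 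Multiplying the three bounds gives $d_B\cdot d_A d_B\,\|J(\Phi)\|_{\max} = (\dim\mathcal{H}_A)(\dim\mathcal{H}_B)^2\,\|J(\Phi)\|_{\max}$, as claimed.

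The chain of Schatten-norm estimates is routine; the step I expect to require the most care is setting up the first two reductions — arguing that rank-one inputs suffice, and handling the vectorization with the correct bookkeeping of which tensor factor carries the ancilla and which carries the channel input. A minor additional point, if one wants the statement for linear maps that need not be Hermiticity-preserving, is to keep $\ket{\psi}$ and $\ket{\phi}$ distinct rather than collapsing to a single pure state; the argument above already does so.
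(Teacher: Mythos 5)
The paper imports this lemma from \cite{leditzky2018prl} without reproducing a proof, so there is no in-paper argument to compare against. Your proof is correct and is essentially the standard one: the reduction to rank-one inputs via convexity of $M\mapsto\|(\id\otimes\Phi)(M)\|_1$ and the extreme-point structure of the trace-norm ball, the vectorization identity $(\id\otimes\Phi)(\ket{\psi}\bra{\phi})=(M_\psi\otimes\Id_B)J(\Phi)(M_\phi^\dagger\otimes\Id_B)$, the Schatten--H\"older step with exponents $(2,\infty,2)$ giving the factor $d_B$, and the elementary bound $\|J(\Phi)\|_\infty\le d_Ad_B\|J(\Phi)\|_{\max}$ all check out and multiply to exactly $(\dim\mathcal{H}_A)(\dim\mathcal{H}_B)^2\|J(\Phi)\|_{\max}$. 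Your closing remark about keeping $\ket{\psi}$ and $\ket{\phi}$ distinct is also the right precaution, since the difference maps to which the lemma is applied in Section~\ref{sec:main_result} are Hermiticity-preserving but not positive.
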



\subsection{Degradability and Approximate Degradability}
The quantum capacity of a channel $\mc{N}$ is related to the coherent information $I_c(\mc{N}, \rho) := S(\mc{N}(\rho)) - S(\mc{N}^c(\rho))$, where $S(\cdot)$ is the von Neumann entropy and $\mc{N}^c$ is the complementary channel mapping the input to the environment.
A channel $\mc{N}: \mathcal{B}(\mathcal{H}_A) \to \mathcal{B}(\mathcal{H}_B)$ is called \emph{degradable} if there exists a CPTP map $\mc{D}: \mathcal{B}(\mathcal{H}_B) \to \mathcal{B}(\mathcal{H}_E)$, called the degrading map, such that $\mc{N}^c = \mc{D} \circ \mc{N}$. For such channels, the quantum capacity reduces to a single-letter formula.

Since exact degradability is fragile, we utilize the notion of approximate degradability introduced in \cite{sutter2017approximate}.

\begin{definition}[Approximate Degradability]
A quantum channel $\mc{N}$ is called $\eta$-approximately degradable if there exists a quantum channel $\mc{D}$ such that
\begin{equation}
    \|\mc{N}^c - \mc{D} \circ \mc{N}\|_\diamond \leq \eta.
\end{equation}
\end{definition}
Crucially, if a channel is $\eta$-approximately degradable, its quantum capacity $Q(\mc{N})$ is bounded by its single-letter coherent information plus a correction term proportional to $\eta \log d$.

\subsection{High-Spin Operators and SU(2) Representations}
We consider quantum systems associated with the spin-$j$ representation of the SU(2) group, where $j \in \{1/2, 1, 3/2, \dots\}$. The dimension of the underlying Hilbert space is $d = 2j+1$. The generators of the algebra, denoted as angular momentum operators $J_1, J_2, J_3$, satisfy the fundamental commutation relations (setting $\hbar=1$):
\begin{equation} \label{eq:commutation}
    [J_a, J_b] = i \varepsilon_{abc} J_c, \quad a,b,c \in \{1,2,3\},
\end{equation}
where $\varepsilon_{abc}$ is the Levi-Civita symbol.
Note that for notational convenience, we have written
\begin{equation}
J_1 := J_x,\quad 
J_2 := J_y,\quad
J_3 := J_z.
\end{equation}

Let $\{\ket{j,m}\}_{m=-j}^j$ be the standard orthonormal eigenbasis of $J_3$, such that $J_3 \ket{j,m} = m \ket{j,m}$. It is convenient to define the ladder operators $J_{\pm} := J_1 \pm i J_2$. Their action on the basis states is given by
\begin{equation}
    J_{\pm} \ket{j,m} = \sqrt{(j \mp m)(j \pm m + 1)} \ket{j,m \pm 1}.
\end{equation}
Using these, the matrix elements of the Cartesian generators in the standard basis are derived as:
\begin{subequations}
\begin{align} \label{eq:general_spin_matrices}
    \bra{j,\ell} J_3 \ket{j,m} &= m \delta_{\ell,m}, \\
    \bra{j,\ell} J_{1} \ket{j,m} &= \tfrac{1}{2} \left( \delta_{\ell,m+1} + \delta_{\ell,m-1} \right) \alpha_{j,m}^{(\ell)},\\
    \bra{j,\ell} J_{2} \ket{j,m} &= \tfrac{i}{2} \left( \delta_{\ell,m-1} - \delta_{\ell,m+1} \right) \alpha_{j,m}^{(\ell)},
\end{align}
\end{subequations}
where $\alpha_{j,m}^{(\ell)} = \sqrt{j(j+1)-\ell m}$. 

A fundamental property of these generators is given by the Casimir operator, which is proportional to the identity operator on the irreducible representation space:
\begin{equation} \label{eq:casimir}
    \mb{J}^2 := J_1^2 + J_2^2 + J_3^2 = j(j+1) \id_d.
\end{equation}
Furthermore, for any spin $j$, the generators are traceless: $\tr(J_1) = \tr(J_2) = \tr(J_3) = 0$. This property ensures the orthogonality between the ``noise'' operators and the identity operator, which will be used later in the geometric analysis.

\subsection{The Modified Landau-Streater Channel}
The Landau-Streater (LS) channel $\mc{L}_j$ is a unital quantum channel constructed from the SU(2) generators \cite{landau1993birkhoff, filippov2019quantum}. It is defined as
\begin{equation}
    \mc{L}_j(\rho) := \frac{1}{j(j+1)} \sum_{k = 1}^{3} J_k \rho J_k.
\end{equation}
Using Eq.~\eqref{eq:casimir}, it is straightforward to verify that $\mc{L}_j$ is trace-preserving:
\begin{equation}
    \sum_{k} \left(\frac{J_k}{\sqrt{j(j+1)}}\right)^\dagger \left(\frac{J_k}{\sqrt{j(j+1)}}\right) = \frac{\mb{J}^2}{j(j+1)} = \id_d.
\end{equation}

To investigate the class of high-dimensional rotationally symmetric noise, we introduce the \emph{MLS channel} $\mc{M}_{j,p}$, parameterized by a noise probability $p \in [0,1]$:
\begin{equation} \label{eq:MLS_def}
    \mc{M}_{j,p}(\rho) := (1-p)\rho + p \mc{L}_j(\rho).
\end{equation}
This channel serves as a natural high-dimensional generalization of the qubit depolarizing channel (which corresponds to the case $j=1/2$).
For convenience, we denote $p_j := p/[j(j+1)]$ throughout.

The MLS channel admits a Kraus representation $\mc{M}_{j,p}(\rho) = \sum_{\mu=0}^3 K_\mu \rho K_\mu^\dagger$ with four Kraus operators:
\begin{equation} \label{eq:kraus_ops}
    K_0 = \sqrt{1-p} \, \id_d, \quad K_k = \sqrt{\frac{p}{j(j+1)}} J_k, \;\; k \in \{1,2,3\},
\end{equation}
where $k$ indexes the three Cartesian generators.
Consequently, the complementary channel $\mc{M}_{j,p}^c: \mathcal{B}(\mathcal{H}_d) \to \mathcal{B}(\mathcal{H}_E)$ maps the system to an environment of dimension $\dim(\mathcal{H}_E) = 4$. The Stinespring isometry $V: \mathcal{H}_d \to \mathcal{H}_E \otimes \mathcal{H}_d$ is given by
\begin{equation}
    V \ket{\psi} = \sum_{\mu=0}^3 \ket{\mu}_E \otimes K_\mu \ket{\psi},
\end{equation}
where $\{\ket{\mu}_E\}_{\mu=0}^3$ is an orthonormal basis for the environment. Note that interestingly, while the system dimension $d=2j+1$ grows with spin $j$, the environment dimension remains fixed at 4, reflecting the structure of the noise generators.

\section{Geometric Structure of MLS Channels} \label{sec:geometry}

A fundamental property determining the quantum capacity of a noisy channel in the low-noise regime is the geometric relationship between the noise process and the identity operation. Before analyzing degradability, we first establish that the “pure” noise part of the MLS channel—the Landau--Streater map $\mathcal{L}_j$—is geometrically orthogonal to the identity channel.

This orthogonality is quantified by the diamond-norm distance. Recall that for a general channel $\mathcal{N} = (1-p)\id + p\mathcal{E}$, the channel is said to be $\varepsilon$-perfect if $\|\mathcal{N}-\id\|_\diamond \le \varepsilon$. For the MLS channel, this distance is given by $p\|\mathcal{L}_j-\id\|_\diamond$. Determining the exact value of this norm is crucial for bounding the error terms in capacity expansions.

\begin{theorem}[Geometric Orthogonality] \label{thm:diamond_distance}
For any spin $j$, the diamond-norm distance between the Landau--Streater channel $\mathcal{L}_j$ and the identity channel is maximal. That is, 
\begin{equation}
    \|\mathcal{L}_j - \id_{2j+1}\|_\diamond = 2.
\end{equation}
\end{theorem}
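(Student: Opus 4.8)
The plan is to establish the bound in two directions. Since the diamond norm between two channels is at most $2$ (each channel has diamond norm $1$, being CPTP), the upper bound $\|\mathcal{L}_j - \mathrm{id}\|_\diamond \le 2$ is automatic. The entire content is the lower bound $\|\mathcal{L}_j - \mathrm{id}\|_\diamond \ge 2$, and the natural way to get this is to exhibit a single input state on the doubled system whose images under $\mathrm{id}_d \otimes \mathcal{L}_j$ and $\mathrm{id}_d \otimes \mathrm{id}_d$ have trace distance exactly $2$, i.e.\ are supported on orthogonal subspaces. The canonical candidate is the maximally entangled state $\ket{\Phi} = \frac{1}{\sqrt{d}}\sum_{m} \ket{j,m}\ket{j,m}$, for which $(\mathrm{id}_d \otimes \mathrm{id}_d)(\ketbra{\Phi}{\Phi}) = \ketbra{\Phi}{\Phi}$ is the (normalized) Choi state of the identity, while $(\mathrm{id}_d \otimes \mathcal{L}_j)(\ketbra{\Phi}{\Phi}) = \frac{1}{d}J(\mathcal{L}_j)$ is the normalized Choi state of $\mathcal{L}_j$.

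\emph{First I would} compute the Choi state of $\mathcal{L}_j$ explicitly. Using $(\mathrm{id}\otimes J_k)\ket{\Phi} = (J_k^{T}\otimes \mathrm{id})\ket{\Phi}$ (the transpose--trick / ricochet property), one gets
\[
(\mathrm{id}\otimes\mathcal{L}_j)(\ketbra{\Phi}{\Phi}) = \frac{1}{d\,j(j+1)}\sum_{k=1}^{3}(J_k^{T}\otimes \mathrm{id})\ketbra{\Phi}{\Phi}(\bar J_k \otimes \mathrm{id}),
\]
so the support of this operator lies in the range of $\sum_k (J_k^{T}\otimes\mathrm{id})\ketbra{\Phi}{\Phi}(\bar J_k\otimes\mathrm{id})$. \emph{The key step} is then to show that $\ket{\Phi}$ itself is orthogonal to each vector $(J_k^{T}\otimes\mathrm{id})\ket{\Phi}$, which reduces to $\bra{\Phi}(J_k^{T}\otimes\mathrm{id})\ket{\Phi} = \frac{1}{d}\mathrm{tr}(J_k^{T}) = \frac{1}{d}\mathrm{tr}(J_k) = 0$ — precisely the tracelessness of the generators emphasized after Eq.~\eqref{eq:casimir}. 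More strongly, I would argue that the whole support of $J(\mathcal{L}_j)$ avoids $\ket{\Phi}$: any vector in that support is a combination of the $(J_k^{T}\otimes\mathrm{id})\ket{\Phi}$, and $\bra{\Phi}$ annihilates each of them, hence annihilates the support, so $\ketbra{\Phi}{\Phi}$ and $\frac{1}{d}J(\mathcal{L}_j)$ have orthogonal supports.

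\emph{Having established orthogonal supports}, I would conclude $\| (\mathrm{id}_d\otimes\mathcal{L}_j)(\ketbra{\Phi}{\Phi}) - (\mathrm{id}_d\otimes\mathrm{id}_d)(\ketbra{\Phi}{\Phi})\|_1 = 2$, since for positive operators $\sigma,\tau$ with orthogonal supports $\|\sigma-\tau\|_1 = \|\sigma\|_1 + \|\tau\|_1 = 1+1 = 2$. Feeding $\ketbra{\Phi}{\Phi}$ (a trace-$1$ state on $\mathcal{H}_d\otimes\mathcal{H}_d$, which is the allowed ancilla dimension) into the definition of the diamond norm gives $\|\mathcal{L}_j - \mathrm{id}_d\|_\diamond \ge 2$, and combined with the trivial upper bound the proof is complete. \emph{The main obstacle} I anticipate is purely bookkeeping: carefully justifying the ricochet identity with the correct complex conjugates/transposes (spin operators are not all real in the standard basis — $J_2$ is imaginary), and making sure the orthogonality argument uses the support of the full sum rather than of individual terms. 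Neither is deep, so I expect the tracelessness observation to carry essentially all the weight, which is exactly the structural point the paper wants to highlight.
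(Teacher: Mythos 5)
Your proposal is correct and follows essentially the same route as the paper's proof: the trivial triangle-inequality upper bound, plus a lower bound obtained by feeding a maximally entangled state into the extended channels and using $\langle\Phi|(J_k\otimes\id)|\Phi\rangle=\tfrac{1}{d}\tr(J_k)=0$ to conclude the two outputs have orthogonal supports and hence trace distance $2$. The only cosmetic differences are that the paper probes with the $SU(2)$ singlet $|\Psi_0\rangle$ rather than the standard maximally entangled state and phrases the orthogonality via the fidelity $\tr(\rho_{\id}\rho_{\mathrm{LS}})=0$ instead of the Choi/ricochet language, but the tracelessness of the generators carries all the weight in both versions.
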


\begin{proof}
We establish the equality by proving both upper and lower bounds.  
The upper bound follows from the triangle inequality: since both $\mathcal{L}_j$ and $\id_{2j+1}$ are CPTP maps, their diamond norms equal $1$, and thus
\begin{equation}
\|\mathcal{L}_j - \id_{2j+1}\|_\diamond
\le \|\mathcal{L}_j\|_\diamond + \|\id_{2j+1}\|_\diamond 
= 1+1=2.
\end{equation}

To show the lower bound, we exhibit a bipartite state that makes the trace distance between the extended channels maximal.  
Let $d=2j+1$, and consider the maximally entangled singlet state
\begin{equation}
    |\Psi_0\rangle := \frac{1}{\sqrt d}
    \sum_{m=-j}^j (-1)^{j-m} |j,m\rangle\otimes|j,-m\rangle,
\end{equation}
and denote $\rho_0 = |\Psi_0\rangle\langle\Psi_0|$.  
Applying the identity channel gives
\begin{equation}
    \rho_{\id} := (\id_d\otimes\id_d)(\rho_0) = \rho_0.
\end{equation}
Applying the Landau--Streater map on the first subsystem gives
\begin{equation}
\rho_{\mathrm{LS}}
:= (\mathcal{L}_j\otimes\id_d)(\rho_0)
= \frac{1}{j(j+1)} \sum_{k=1}^{3}
(J_k\otimes\id)\, \rho_0\, (J_k\otimes\id)^\dagger.
\end{equation}

Since $\rho_{\id}$ is pure, the fidelity reduces to the expectation value: 
\begin{align}
F(\rho_{\id},\rho_{\mathrm{LS}})
&= \mathrm{tr}(\rho_{\id}\rho_{\mathrm{LS}})
= \langle\Psi_0|\rho_{\mathrm{LS}}|\Psi_0\rangle \nonumber\\
&= \frac{1}{j(j+1)}\sum_{k=1}^{3}
\left|\langle\Psi_0| (J_k\otimes\id) |\Psi_0\rangle\right|^2.
\end{align} 

Since $|\Psi_0\rangle$ has maximally mixed marginal on the first subsystem,
$\tr_2(\rho_0)=\id_d/d$, we have
$\langle\Psi_0|(A\otimes\id)|\Psi_0\rangle=\tr(A\,\tr_2\rho_0)=\tr(A)/d$.

Because the angular momentum generators are traceless, $\operatorname{tr}(J_k)=0$, we obtain
\begin{equation}\label{eq:singlet-orthogonality}
\langle\Psi_0| (J_k\otimes\id) |\Psi_0\rangle = 0, \qquad k=1,2,3.
\end{equation}
For completeness, an explicit verification of
\eqref{eq:singlet-orthogonality}
using ladder operators is provided in Appendix~\ref{app:singlet-verification}.

Thus $F(\rho_{\id},\rho_{\mathrm{LS}})=0$. Since $\rho_{\id}$ is pure, this implies that $\rho_{\mathrm{LS}}$ has support orthogonal to $\rho_{\id}$.
Whenever $\rho$ and $\sigma$ are PSD operators with orthogonal support, the trace distance satisfies
\begin{equation}
\|\rho-\sigma\|_1 = \mathrm{tr}(\rho)+\mathrm{tr}(\sigma).
\end{equation}
Hence
\begin{equation}
\| (\mathcal{L}_j\otimes\id)(\rho_0)
 - (\id\otimes\id)(\rho_0)\|_1
= \mathrm{tr}(\rho_{\mathrm{LS}})+\mathrm{tr}(\rho_{\id})=2,
\end{equation}
giving the desired lower bound.  
Together with the upper bound, this proves the theorem.
\end{proof}

\begin{corollary}[$\varepsilon$-perfectness of MLS Channels]
The Modified Landau--Streater channel $\mathcal{M}_{j,p}$ is $2p$-perfect for all $p\in[0,1]$. That is, 
\begin{equation}
\|\mathcal{M}_{j,p}-\id_{2j+1}\|_\diamond = 2p.
\end{equation}
\end{corollary}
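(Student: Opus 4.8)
The plan is to reduce the statement directly to Theorem~\ref{thm:diamond_distance} using the linear structure of the MLS channel. Since $\mathcal{M}_{j,p} = (1-p)\id_{2j+1} + p\,\mathcal{L}_j$, the difference map factors as $\mathcal{M}_{j,p} - \id_{2j+1} = p\,(\mathcal{L}_j - \id_{2j+1})$. The diamond norm is absolutely homogeneous, so $\|\mathcal{M}_{j,p} - \id_{2j+1}\|_\diamond = |p|\,\|\mathcal{L}_j - \id_{2j+1}\|_\diamond$; since $p \in [0,1]$, this equals $p\,\|\mathcal{L}_j - \id_{2j+1}\|_\diamond = 2p$ by Theorem~\ref{thm:diamond_distance}.

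If one prefers not to invoke homogeneity as a black box, I would reprove the two bounds in parallel with the theorem. For the upper bound, $\|\mathcal{M}_{j,p} - \id_{2j+1}\|_\diamond = \|p\,\mathcal{L}_j - p\,\id_{2j+1}\|_\diamond \le \|p\,\mathcal{L}_j\|_\diamond + \|p\,\id_{2j+1}\|_\diamond = 2p$, using that $\mathcal{L}_j$ and $\id_{2j+1}$ are CPTP and hence have unit diamond norm. For the lower bound, feed the same singlet state $\rho_0 = |\Psi_0\rangle\langle\Psi_0|$ into $\mathcal{M}_{j,p}\otimes\id_d$ and $\id_d\otimes\id_d$: one has $(\mathcal{M}_{j,p}\otimes\id_d)(\rho_0) = (1-p)\rho_0 + p\,\rho_{\mathrm{LS}}$ with $\rho_{\mathrm{LS}}$ and $\rho_{\id}=\rho_0$ orthogonally supported, exactly as established in the proof of Theorem~\ref{thm:diamond_distance}. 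Hence the trace distance is $\|p\,(\rho_{\mathrm{LS}} - \rho_0)\|_1 = p\,\|\rho_{\mathrm{LS}} - \rho_0\|_1 = 2p$, and taking the supremum over inputs gives $\|\mathcal{M}_{j,p} - \id_{2j+1}\|_\diamond \ge 2p$.

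There is essentially no obstacle here: the corollary is a one-line consequence of the diamond-norm computation already carried out. The only points deserving a moment's care are that $p$ is nonnegative, so the scaling pulls out $p$ cleanly, and that the extremal witness from the theorem remains extremal after convex mixing with the identity — which it does precisely because the Landau--Streater component maps the (first half of the) singlet into a subspace orthogonal to the input state, so the orthogonality of supports is preserved.
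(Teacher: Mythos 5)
Your proposal is correct and matches the paper's intent exactly: the corollary is stated as an immediate consequence of Theorem~\ref{thm:diamond_distance} via the factorization $\mathcal{M}_{j,p}-\id = p(\mathcal{L}_j-\id)$ and homogeneity of the diamond norm. Your expanded second argument (reusing the singlet witness and orthogonal supports) is a valid, slightly more self-contained version of the same reasoning.
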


The geometric result $\|\mathcal{L}_j - \id\|_\diamond = 2$ shows that the Landau--Streater noise and the identity channel are \emph{perfectly distinguishable} in a single use of the channel.  
This reflects a fundamental representation-theoretic intuition: the noise operators act as a rank-$1$ tensor and therefore transfer weight from the rotationally invariant sector to orthogonal components, such as the $J=1$ sector and its higher-dimensional analogues.

This orthogonality provides the geometric mechanism that enables the cancellation of first-order terms in the degradability analysis.  
If the noise had nonzero overlap with the identity (i.e., if the diamond distance were $<2$), the channel would exhibit a biased first-order drift, producing $O(p)$ or $O(p^{3/2})$ degradability terms.  
Here, the complete orthogonality forces the leading-order deviation to be quadratic, enabling the $O(p^2)$ approximate degradability established in Section~\ref{sec:main_result}.



\section{Quadratic Approximate Degradability} \label{sec:main_result}

We now present the main result of this paper. We rigorously establish that the MLS channel family exhibits a quadratic suppression of non-degradability in the low-noise regime. Recall that a channel $\mc{N}$ is $\eta$-approximately degradable if there exists a degrading map $\mc{D}$ such that $\|\mc{N}^c - \mc{D} \circ \mc{N}\|_\diamond \leq \eta$.

Throughout this section, we bound $\|\Phi\|_\diamond$ via
Lemma~\ref{lemma:diamond_bound} by estimating $\|J(\Phi)\|_{\max}$ entrywise.
Equivalently, we apply $\Phi$ to matrix units $E_{\alpha\beta}=|\alpha\rangle\langle\beta|$
and take the maximum over $\alpha,\beta$ and environment indices.

\begin{theorem}[$O(p^2)$ Degradability] \label{thm:main}
For any spin value $j$, the Modified Landau-Streater channel $\mc{M}_{j,p}$ is approximately degradable with parameter $\eta = O(p^2)$. Specifically, there exists a constant $C_j$ dependent only on $j$ such that for all sufficiently small $p$, we have
\begin{equation}
  \inf_{\mc{D}} \| \mc{M}^{\text{c}}_{j,p} - \mc{D} \circ \mc{M}_{j,p} \|_\diamond \leq C_j p^2.
\end{equation}
\end{theorem}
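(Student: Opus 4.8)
The plan is to construct an explicit, $SU(2)$-covariant degrading map $\mc{D}$ and then bound the diamond norm $\|\mc{M}^{\mathrm c}_{j,p} - \mc{D}\circ\mc{M}_{j,p}\|_\diamond$ by passing to the Choi matrix and applying Lemma~\ref{lemma:diamond_bound}. First I would write the complementary channel $\mc{M}^{\mathrm c}_{j,p}(\rho) = \sum_{\mu,\nu} \tr(K_\nu^\dagger K_\mu \rho)\,\ket{\mu}\!\bra{\nu}_E$ explicitly using the Kraus operators of \eqref{eq:kraus_ops}. The ``signal'' block is $\bra{0}\mc{M}^{\mathrm c}_{j,p}(\rho)\ket{0}_E = (1-p)\tr(\rho)$, the purely noisy block is $\bra{k}\mc{M}^{\mathrm c}_{j,p}(\rho)\ket{\ell}_E = \tfrac{p}{j(j+1)}\tr(J_\ell J_k \rho)$ for $k,\ell\in\{1,2,3\}$, and the cross terms are $\bra{0}\mc{M}^{\mathrm c}_{j,p}(\rho)\ket{k}_E = \tfrac{\sqrt{p(1-p)}}{\sqrt{j(j+1)}}\tr(J_k\rho)$, which are $O(\sqrt p)$ and carry the only first-order danger. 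The degrading map should be designed so that $\mc{D}\circ\mc{M}_{j,p}$ reproduces the diagonal ($0$--$0$ and $k$--$\ell$) blocks exactly while leaving a residual only in the off-diagonal $0$--$k$ blocks that is $O(p^{3/2})\cdot(\text{something that itself vanishes at }p=0)$, i.e.\ effectively $O(p^2)$ after the tracelessness of $J_k$ is exploited again inside $\mc{D}\circ\mc{M}_{j,p}$.

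The key construction is to let $\mc{D}$ be the composition of (i) a measurement-like map that distinguishes the ``no-error'' outcome from the ``error'' outcomes using the fact that $\mc{L}_j$ and $\id$ have orthogonal output supports on the maximally entangled input (Theorem~\ref{thm:diamond_distance}), and (ii) a covariant reconstruction built from the Casimir decomposition $\mc{B}(\mc{H}_d)\otimes\mc{B}(\mc{H}_d)\cong\bigoplus_J V_J$. Concretely, on the output state $\mc{M}_{j,p}(\rho)=(1-p)\rho + p\,\mc{L}_j(\rho)$, the map $\mc{D}$ outputs $\ket{0}\!\bra{0}_E$ with amplitude governed by the ``overlap with $\rho$'' and spreads the complementary weight over $\ket{k}\!\bra{\ell}_E$ according to how $\mc{L}_j(\rho)$ decomposes under the adjoint $SU(2)$ action; because the MLS channel is $SU(2)$-covariant, by a twirling/averaging argument it suffices to take $\mc{D}$ covariant as well, which pins down its form up to finitely many free parameters (one per irrep $J$ appearing). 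I would then choose those parameters to kill the diagonal blocks of $\mc{M}^{\mathrm c}_{j,p}-\mc{D}\circ\mc{M}_{j,p}$ identically, reducing the problem to bounding the residual cross-block.

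The residual Choi matrix then has entries only in the $0$--$k$ sectors, of the schematic form $\sqrt{p(1-p)}\,c_k - (\text{contribution of }\mc{D}\text{ acting on }p\,\mc{L}_j(\rho))$, where the first term already requires a $\sqrt p$ from $K_0$ and the second requires a $p$; since $\tr(J_k)=0$ forces the ``$(1-p)\rho$ part fed into $\mc{D}$'' to contribute nothing to the $0$--$k$ block at zeroth order, every surviving term carries at least a total power $p^{1}$ beyond an already-present $\sqrt p$, i.e.\ $p^{3/2}$, and one more application of tracelessness (or of the eigen-invariance property flagged in the introduction) inside the overlap upgrades this to $p^2$. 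Multiplying the resulting $\|J(\cdot)\|_{\max}=O(p^2)$ bound by the fixed prefactor $d\cdot 4^2 = 16(2j+1)$ from Lemma~\ref{lemma:diamond_bound} gives $\eta\le C_j p^2$ with $C_j = O(j)$, which is the claim. \textbf{The main obstacle} I anticipate is showing that the cross-block residual is genuinely $O(p^2)$ rather than merely $O(p^{3/2})$: this is exactly where the qubit argument of Leditzky et al.\ must be generalized, and it will hinge on verifying that the covariant degrading map can be tuned so that the linear-in-$p$ correction to the $0$--$k$ block vanishes, which in turn depends on a representation-theoretic identity relating $\sum_k \tr(J_k\,\mc{L}_j(\rho)\,\cdots)$-type terms to $\tr(J_k\rho)$ with a matching coefficient — equivalently, on the ``normal eigen-invariance'' of the noise operators that Proposition~\ref{prop:rigorous_condition} will later abstract. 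A secondary technical point is checking complete positivity of the constructed $\mc{D}$ for all small $p$, which I would handle by writing $\mc{D}$ as an explicit convex mixture of isometric/unitary-conjugation channels and a Casimir-projection channel.
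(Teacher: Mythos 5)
Your overall skeleton matches the paper's: pass to the Choi matrix, invoke Lemma~\ref{lemma:diamond_bound}, identify the $(0,k)$ cross blocks of the complementary channel as the only entries carrying an $O(\sqrt p)$ amplitude, and cancel their $O(p^{3/2})$ residual by tuning a covariant degrading map. But there are two genuine gaps. First, your degrading map is never actually constructed: a ``measurement-like map'' composed with a ``covariant reconstruction with one free parameter per irrep'' is a family, not a map, and you defer both the choice of those parameters and the complete-positivity check. The paper sidesteps all of this with a one-line ansatz, $\mc{D} = \mc{M}^c_{j,s}$ with $s = p + ap^2$, which is automatically CPTP (it is the complementary channel of a legitimate MLS channel) and has a single scalar parameter $a$. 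Relatedly, your stated goal of killing the diagonal blocks of the difference \emph{identically} is both unnecessary and in tension with the cross-block cancellation: the perturbation $s-p = ap^2>0$ that is needed to kill the $p^{3/2}$ cross term necessarily leaves an $ap^2\tr(\rho)$ residue in the $(0,0)$ entry, and $O(p^2)$ there is all one needs anyway.

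Second, and more importantly, your explanation of why the cross block improves from $O(p^{3/2})$ to $O(p^2)$ is wrong as stated. Tracelessness of $J_k$ only buys you that the $(0,k)$ entry of both $\mc{M}^c_{j,p}(\rho)$ and $(\mc{D}\circ\mc{M}_{j,p})(\rho)$ is proportional to $\tr(J_k\rho)$ with an overall $\sqrt p$ prefactor; ``one more application of tracelessness'' does not remove the $p^{3/2}$ term. What actually removes it is the eigen-invariance identity $\sum_\ell J_\ell J_k J_\ell = [j(j+1)-1]\,J_k$ (the paper's Lemma~\ref{lemma:algebra_identities}), which gives $\tr\!\big(J_k\,\mc{M}_{j,p}(\rho)\big) = \big(1 - \tfrac{p}{j(j+1)}\big)\tr(J_k\rho)$, so that the residual coefficient is $c_j(p) - c_j(s)\big(1-\tfrac{p}{j(j+1)}\big)$; its $p^{3/2}$ coefficient equals $\tfrac{1}{j(j+1)} - \tfrac a2$ and vanishes only for the specific choice $a = 2/(j(j+1))$. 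You do flag exactly this identity as your ``main obstacle,'' which shows you have located the crux, but a proposal that leaves the crux open---while asserting the conclusion follows from a property (tracelessness) that does not deliver it---is not yet a proof. To complete the argument you would need to (i) commit to the explicit ansatz, (ii) prove the eigen-invariance identity, (iii) carry out the Taylor expansion to fix $a$, and (iv) check that the $(k,\ell)$ blocks of the two maps differ only by $s-p = O(p^2)$.
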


The proof is constructive. We explicitly identify a degrading map $\mc{D}$ that achieves this bound. The construction relies on the intuition that for highly symmetric channels, a slightly noisier version of the complementary channel serves as an effective degrading map \cite{leditzky2018prl}.

\subsection{Explicit Form of the Complementary Channel}
To analyze the distance between maps, we first derive the matrix representation of the complementary channel $\mc{M}^c_{j,p}$. Recall the Kraus operators from Eq.~\eqref{eq:kraus_ops}:
\begin{equation}
    K_0 = \sqrt{1-p} \, \id_d, \quad K_k = \sqrt{p_j} J_k, \;\; k \in \{1,2,3\},
\end{equation}
where we use the shorthand $p_j := \frac{p}{j(j+1)}$. The complementary channel maps a system state $\rho$ to an environment state in $\mathcal{B}(\mathbb{C}^4)$. 
For the Choi-entrywise bounds below, we will later allow $\rho$ to be an arbitrary operator (in particular, a matrix unit $\ket{\alpha}\bra{\beta}$).

Let $\{\ket{\mu}\}_{\mu=0}^3$ be the orthonormal basis of the environment. The action of the complementary channel is given by
\begin{equation} \label{eq:complementary_def}
    \mc{M}^c_{j,p}(\rho) = \sum_{\mu, \nu = 0}^3 \tr(K_\mu \rho K_\nu^\dagger) \ket{\mu}\bra{\nu}.
\end{equation}
This results in a $4 \times 4$ matrix structure where the entries are determined by operator inner products. Let us define the scaling coefficient
\begin{equation}
    c_j(p) := \sqrt{(1-p)p_j} = \sqrt{\frac{p(1-p)}{j(j+1)}}.
\end{equation}
Using the hermiticity of $J_k$, the matrix representation of $\mc{M}^c_{j,p}(\rho)$ can be written in block form as
\begin{equation} \label{eq:Mc_matrix}
    \mc{M}^c_{j,p}(\rho) = 
    \begin{pmatrix}
        (1-p)\tr(\rho) & \mb{v}_p(\rho)^\dagger \\
        \mb{v}_p(\rho) & \mb{W}_p(\rho)
    \end{pmatrix},
\end{equation}
where $\mb{v}_p(\rho)$ is a $3 \times 1$ vector and $\mb{W}_p(\rho)$ is a $3 \times 3$ matrix with entries:
\begin{subequations}
\begin{align}
    [\mb{v}_p(\rho)]_k &= c_j(p) \tr(J_k \rho), \quad k \in \{1,2,3\}, \\
    [\mb{W}_p(\rho)]_{k,l} &= p_j \tr(J_k \rho J_l), \quad k,l \in \{1,2,3\}.
\end{align}
\end{subequations}
Here we have used the cyclic property of trace to write $\tr(J_k J_l \rho)$ as $\tr(J_k \rho J_l)$ for the diagonal block.

\subsection{Construction of the Degrading Map}

We now explicitly define the degrading map used throughout the proof.
For a fixed spin value $j$ and noise parameter $p$, let
\begin{equation}
  \mathcal D := \mathcal M^c_{j,s},
  \qquad s := p + a p^2,
\end{equation}
where the constant $a>0$ will be chosen later.

We define the composite map $\Psi := \mc{D} \circ \mc{M}_{j,p} = \mc{M}^c_{j,s} \circ \mc{M}_{j,p}$.
The output of this map, $\Psi(X)$, is obtained by replacing $X$ in Eq.~\eqref{eq:Mc_matrix} (parameterized by $s$) with the channel output $\mc{M}_{j,p}(X)$. Here, $X$ denotes an arbitrary trace-class operator.

Crucially, we must evaluate terms like $\tr(J_k \mc{M}_{j,p}(\rho))$. Using the definition of the channel and the algebraic properties of SU(2) generators (specifically Lemma~\ref{lemma:algebra_identities} in Appendix~\ref{app:lemmas}), we have
\begin{align}
    \tr(J_k \mc{M}_{j,p}(X)) &= (1-p)\tr(J_k X) + p_j \sum_{l=1}^3 \tr(J_k J_l X J_l) \nonumber \\
    &= (1-p)\tr(J_k X) + p_j [j(j+1)-1] \tr(J_k X) \nonumber \\
    &= \left( 1 - \frac{p}{j(j+1)} \right) \tr(J_k X) \nonumber \\
    &= (1 - p_j) \tr(J_k X).
\end{align}
This derivation shows that the channel $\mc{M}_{j,p}$ simply scales the expectation values of the angular momentum operators by a factor of $(1-p_j)$.

Substituting this into the expression for $\Psi(\rho)$, the off-diagonal vector component becomes
\begin{equation} \label{eq:Psi_vector}
    [\mb{v}_\Psi(\rho)]_k = c_j(s) (1-p_j) \tr(J_k \rho).
\end{equation}

\subsection{Cancellation of First-Order Terms}


Define the difference map
\begin{equation}
  \Phi := \mathcal M^c_{j,p} - \mathcal D \circ \mathcal M_{j,p}.
\end{equation}
Throughout this section, we bound $\|\Phi\|_\diamond$ via the
entrywise maximum norm of its Choi matrix $J(\Phi)$.

The dominant contributions to $\|J(\Phi)\|_{\max}$ come from the
$(0,k)$ environment blocks (with $k\in\{1,2,3\}$), whose coefficients
scale as $\sqrt{p}$ before cancellation.

For an arbitrary input operator $X$ (in particular $X=|\alpha\rangle\langle\beta|$),
the corresponding $(0,k)$ entry reads
\begin{align}
\Delta_{0,k}(X)
= \bigl[c_j(p) - c_j(s)(1-p_j)\bigr]\tr(J_k X).
\end{align}

To ensure that the degradability parameter scales as $O(p^2)$, we must eliminate the lower-order error terms in this expression. A detailed Taylor expansion analysis (provided in Appendix~\ref{app:degradability_calc}) reveals that the coefficient of the $p^{3/2}$ term vanishes if and only if the degrading parameter $a$ satisfies the condition:
\begin{equation}
    \frac{1}{j(j+1)} - \frac{a}{2} = 0.
\end{equation}
Solving for $a$, we obtain the optimal degrading parameter
\begin{equation} \label{eq:optimal_a}
    a = \frac{2}{j(j+1)}.
\end{equation}
With this specific choice, the leading order error in $\Delta_{0,k}$ becomes $O(p^{5/2})$, and all other blocks in the Choi matrix are bounded by $O(p^2)$. Consequently, the diamond norm satisfies the quadratic bound.

\subsection{Diamond Norm Bound}

With $a = 2/(j(j+1))$, we bound $\|J(\Phi)\|_{\max}$ entrywise by evaluating
$\Phi$ on matrix units $E_{\alpha\beta}:=|\alpha\rangle\langle\beta|$.

\paragraph{Preliminary bound for the $(k,\ell)$ block.}
We first control the effect of the channel action inside the
$(k,\ell)$ environment block.
By Theorem~\ref{thm:diamond_distance},
\begin{equation}
\mathcal M_{j,p} = (1-p)\,\id + p\,\mathcal L_j,
\qquad
\|\mathcal L_j - \id\|_\diamond = 2,
\end{equation}
and hence for any matrix unit $X=E_{\alpha\beta}$,
\begin{equation}
\|\mathcal M_{j,p}(X) - X\|_1 \le 2p .
\end{equation}
Using the trace inequality
$|\tr(A Y B)| \le \|A\|_\infty \|B\|_\infty \|Y\|_1$,
we obtain
\begin{equation}\label{eq:kl-continuity}
\bigl|
\tr(J_k \mathcal M_{j,p}(X) J_\ell)
-
\tr(J_k X J_\ell)
\bigr|
\le
2p\,\|J_k\|_\infty \|J_\ell\|_\infty
= O_j(p),
\end{equation}
uniformly over all $X=E_{\alpha\beta}$.
Here $O_j(\cdot)$ denotes a bound with constants depending on $j$.

\paragraph{Entrywise bounds.}
We now bound each environment block of $J(\Phi)$.

\begin{itemize}
  \item \textbf{The $(0,0)$ environment entry.}
  For $X=E_{\alpha\beta}$, we have $\tr(X)=\delta_{\alpha\beta}$, and hence
  the $(0,0)$ entry contributes at most $|s-p|=O(p^2)$.

  \item \textbf{The $(0,k)$ and $(k,0)$ environment entries.}
  For $X=E_{\alpha\beta}$, the coefficient
  $\bigl[c_j(p)-c_j(s)(1-p_j)\bigr]$ is $O(p^{5/2})$ by the choice of $a$,
  while $|\tr(J_k E_{\alpha\beta})| \le \|J_k\|_{\max}=O_j(1)$.
  Hence these entries are $O_j(p^{5/2})$.

  \item \textbf{The $(k,\ell)$ environment block.}
  Each entry is a linear combination of terms of the form
  $p_j \tr(J_k X J_\ell)$ and
  $s_j \tr(J_k \mathcal M_{j,p}(X) J_\ell)$.
  By \eqref{eq:kl-continuity} and $s-p=O(p^2)$, their difference is
  $O_j(p^2)$ uniformly over $X=E_{\alpha\beta}$.
\end{itemize}

Consequently, $\|J(\Phi)\|_{\max} \le C_j p^2$.
Applying Lemma~\ref{lemma:diamond_bound} and using $d_E=4$, we obtain
\begin{equation}
\|\Phi\|_\diamond \le 16(2j+1)\,\|J(\Phi)\|_{\max} = O(p^2).
\end{equation}

This concludes the proof of Theorem~\ref{thm:main}. The $O(p^2)$ behavior is a direct consequence of the symmetry-induced cancellation of the $p^{3/2}$ terms in the complementary channel's eigenstructure.

\section{Numerical Analysis and Capacity Bounds} \label{sec:numerics}

Having established that the approximate degradability parameter of the MLS channel
scales quadratically as $\eta = O(p^2)$, we now examine the operational consequences
of this behavior for quantum and private communication rates.
Approximate degradability implies that the quantum capacity $Q(\mathcal N)$
and the private capacity $P(\mathcal N)$ are both controlled by the coherent
information up to an explicit continuity correction~\cite{leditzky2018prl}.
This enables the derivation of single-letter achievable rate bounds in the
low-noise regime.

In this section, we first derive explicit lower bounds on the quantum and private
capacities of MLS channels based on the coherent information and the continuity
framework of approximate degradability.
We then provide numerical results that illustrate the predicted quadratic scaling
of the degradability parameter and its impact on the resulting capacity bounds.
Together, these results demonstrate that the symmetry-protected $O(p^2)$ behavior
is not only analytically guaranteed but also reflected in operationally meaningful
rate estimates.

\subsection{Capacity Lower Bounds}
The primary utility of approximate degradability lies in its ability to bound capacities using single-letter quantities. Specifically, the approximate degradability condition constrains both capacities to be close to the coherent information. While it is always true that $P(\mc{N}) \ge Q(\mc{N})$, Theorem~1 of Leditzky et al.~\cite{leditzky2018prl} establishes that for an $\eta$-approximately degradable channel, the private capacity is also upper-bounded by the coherent information plus a continuity correction term. Consequently, both $Q(\mc{N})$ and $P(\mc{N})$ admit bounds within the same neighborhood:
\begin{equation} \label{eq:sandwich_bound}
    I_c(\mc{N}) + \delta(\eta, d_E) \ge P(\mc{N}) \ge Q(\mc{N}) \ge I_c(\mc{N}) - \delta(\eta, d_E).
\end{equation}
For the purpose of establishing a operationally meaningful lower bound, we focus on the lower bound derived from the framework of Sutter et al.~\cite{sutter2017approximate} and the sharp continuity bounds of Audenaert~\cite{audenaert2007}:
\begin{equation} \label{eq:capacity_bound}
    Q(\mc{N}) \ge \max_{\rho} I_c(\mc{N}, \rho) - \delta(\eta, d_E),
\end{equation}
where $I_c(\mc{N}, \rho) = S(\mc{N}(\rho)) - S(\mc{N}^c(\rho))$ is the coherent information and $d_E$ is the dimension of the environment. The correction term $\delta(\eta, d_E)$ is given explicitly by
\begin{align}
\delta(\eta,d_E)
&= \frac{\eta}{2}\log(d_E-1)
  + \eta\log d_E
  + h\!\left(\frac{\eta}{2}\right)  \notag\\
&\quad
  + \left(1+\frac{\eta}{2}\right)
    h\!\left(\frac{\eta}{2+\eta}\right),
\end{align}
where $h(x)$ is the binary entropy function. Importantly, this error term behaves as $\delta(\eta, d_E) = O(\eta \log(1/\eta))$ for small $\eta$, ensuring a rapid convergence to the true capacities in the low-noise limit.

For the MLS channels $\mc{M}_{j,p}$, the channel is covariant with respect to the irreducible representation of SU(2). Due to this symmetry, it is natural to focus on input states that commute with the group action. We therefore consider the maximally mixed state $\pi = \id_d/d$, which is invariant under SU(2). While the coherent information is not generally concave, making rigorous optimization difficult, numerical evidence suggests that $\pi$ maximizes $I_c(\mc{M}_{j,p}, \rho)$ for the parameter range of interest. We thus use $I_c(\mc{M}_{j,p}, \pi)$ to establish a rigorous lower bound. We emphasize that optimality is not required for our purpose, as any fixed input state yields a valid lower bound.

We now derive the analytic form of $I_c(\mc{M}_{j,p}, \pi)$. Since $\mc{M}_{j,p}$ is unital, the output state is $\mc{M}_{j,p}(\pi) = \pi$, yielding a maximal output entropy $S(\mc{M}_{j,p}(\pi)) = \log(2j+1)$.
Next, we calculate the entropy of the environment state $\rho_E = \mc{M}_{j,p}^c(\pi)$. Based on the block matrix structure derived in Eq.~\eqref{eq:Mc_matrix}, the environment state is diagonal in the computational basis $\{\ket{0}, \dots, \ket{3}\}$ with eigenvalues $\{1-p, p/3, p/3, p/3\}$. The entropy of the complementary channel is thus given by
\begin{equation}
    S(\mc{M}_{j,p}^c(\pi)) = -(1-p)\log(1-p) - p \log(p/3).
\end{equation}
Combining these results, the coherent information for the maximally mixed state is
\begin{equation}
    I_c(\pi) = \log(2j+1) + (1-p)\log(1-p) + p \log(p/3).
\end{equation}

\begin{figure}[t]
    \centering
    \includegraphics[width=0.9\linewidth]{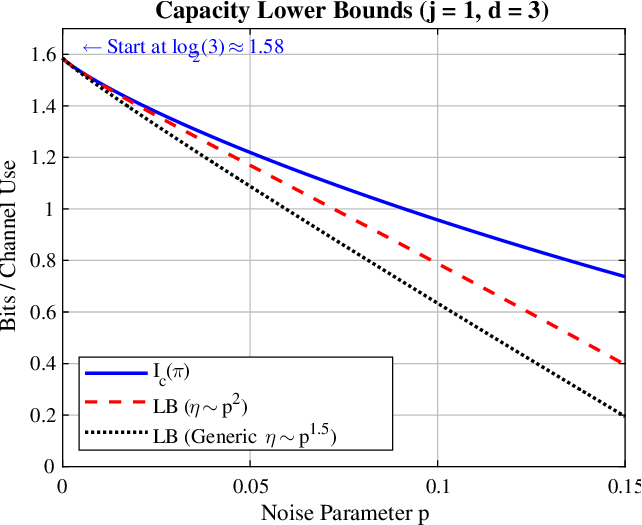} 
    \caption{Comparison of the single-letter coherent information $I_c(\pi)$ (solid blue) and the approximate degradability lower bounds for the spin-1 MLS channel ($d=3$). The red dashed line represents the lower bound derived using Theorem 1 of \cite{leditzky2018prl} with $\eta \sim O(p^2)$, which tracks the coherent information closely. The generic scaling $\eta \sim O(p^{3/2})$ (black dotted line), typical for non-symmetric channels, yields a significantly looser bound. Note that since $P(\mathcal{N}) \ge Q(\mathcal{N})$, this lower bound applies to both quantum and private capacities.}
    \label{fig:capacity}
\end{figure}

By substituting the quadratic bound $\eta \approx C_j p^2$ into Eq.~\eqref{eq:capacity_bound}, we obtain a rigorous lower bound. Fig.~\ref{fig:capacity} illustrates this for $j=1$ ($d=3$). The results highlight that the $O(p^2)$ scaling is critical: if the channel were only $O(p^{3/2})$-approximately degradable, the penalty term $\delta(\eta, d_E)$ would dominate, rendering the bound non-informative for capacity estimation in the low-noise regime.

\begin{figure}[t]
    \centering
    \includegraphics[width=0.9\linewidth]{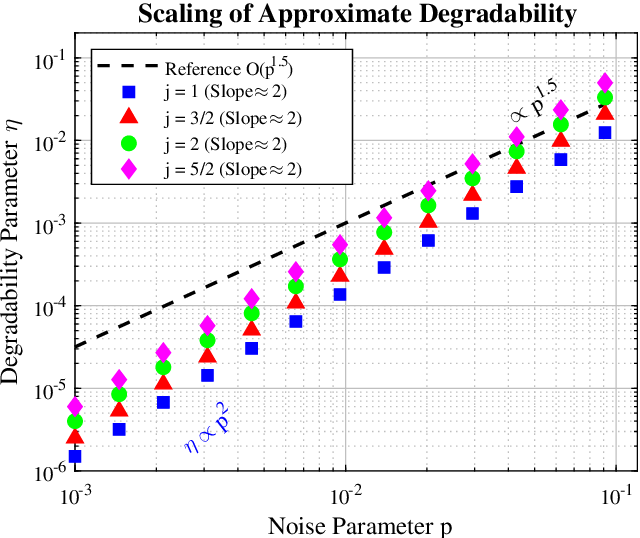}
    \caption{Log-log plot of the approximate degradability parameter $\eta$ versus the noise parameter $p$. The data points (squares for $j=1$, triangles for $j=3/2$) align closely with a slope of $2$, numerically validating the $\eta \sim O(p^2)$ theorem. The dashed line represents the generic scaling for non-symmetric channels ($O(p^{3/2})$), highlighting the significant advantage achieved by the symmetry-protected MLS channel.}
    \label{fig:scaling}
\end{figure}

\subsection{Verification of Quadratic Scaling}
To numerically verify the asymptotic behavior established in Theorem~\ref{thm:main}, we computed the diamond norm distance $\|\mc{M}^c_{j,p} - \mc{D} \circ \mc{M}_{j,p}\|_\diamond$ using the SDP formulation. We utilized the explicit degrading map constructed in Section~\ref{sec:main_result}-B with the optimized parameter $a = \frac{2}{j(j+1)}$.

Fig.~\ref{fig:scaling} presents the approximate degradability parameter $\eta$ as a function of the noise probability $p$ on a logarithmic scale. In such a plot, a power law relationship $\eta \propto p^k$ manifests as a straight line with slope $k$. The numerical data for spins $j \in \{1, 3/2, 2, 5/2\}$ (markers) all clearly exhibit a linear behavior with a slope of approximately $2$, numerically consistent with the universal quadratic scaling $\eta \sim O(p^2)$ predicted by our theorem. 

For comparison, the black dashed line serves as a baseline representing the generic scaling law $\eta \propto p^{3/2}$ for non-symmetric low-noise channels~\cite{leditzky2018prl}, which is independent of the specific spin dimension. While its vertical offset is chosen for visual clarity, the crucial feature is the divergence in slopes: the steeper slope of the MLS data ($2 > 1.5$) illustrates that symmetry protection leads to the non-degradability error to vanish at a strictly faster rate than in the generic case as $p \to 0$.

\section{Theoretical Origins: Symmetry as a Resource} \label{sec:mechanism}

The quadratic suppression of non-degradability ($\eta = O(p^2)$) established in
Theorem~\ref{thm:main} admits a natural structural interpretation beyond the
explicit calculations presented earlier.
Rather than being an accidental consequence of specific matrix identities,
this behavior is closely tied to the symmetry properties of the MLS channel.

In this section, we provide an interpretative explanation for why the generic
$O(p^{3/2})$ scaling observed in non-symmetric channels does not appear here.
We argue that the underlying group symmetry imposes selection rules that forbid
first-order contributions to non-degradability, thereby enforcing a faster
decay of the error term in the low-noise regime.


\subsection{MLS as a Covariant Channel}
The MLS channel falls within the rigorous definition of a covariant channel. Let $G = \text{SU}(2)$ act on the Hilbert space $\mathcal{H}_d$ via the irreducible representation $U_g$. The channel satisfies the covariance condition (see Appendix~\ref{app:covariance} for an explicit proof):
\begin{equation}\label{eq:MLS-covariance}
    \mc{M}_{j,p}(U_g \rho U_g^\dagger) = U_g \mc{M}_{j,p}(\rho) U_g^\dagger, \quad \forall g \in \text{SU}(2).
\end{equation}
In the framework of the resource theory of asymmetry
\cite{marvian2014extending,bartlett2007reference},
such covariant operations are regarded as ``free'' in the sense that
they cannot generate asymmetry from a symmetric state.
The generators $\{J_x, J_y, J_z\}$ form a rank-1 irreducible tensor operator (transforming as a spin-1 representation), whereas the identity operator is a rank-0 scalar (spin-0). For the channel to remain SU(2)-covariant, the noise must act isotropically in the sense that it does not single out a preferred direction. This is precisely realized in the MLS construction, where the three components of the noise enter symmetrically.

\subsection{Selection Rules and the Forbidden Interference}\label{subsec:selection}
The approximate degradability parameter $\eta$ quantifies the distinguishability between the true complementary channel $\mc{M}^c$ and a simulated version $\mc{D} \circ \mc{M}$. 
In generic low-noise channels lacking symmetry, the optimal degradability parameter typically scales as $\eta \sim O(p^{3/2})$ \cite{leditzky2018prl}. This limitation can be understood as arising from the interference of leading-order noise amplitudes (scaling as $\sqrt{p}$) that cannot be perfectly degraded.

At the microscopic level, such contributions originate from linear interference between the ``signal'' (represented by the identity component) and the ``noise'' (generated by $J_k$). Schematically, this is captured by expectation values of the form
\begin{equation}
    \langle \psi | K_0^\dagger K_k | \psi \rangle,
\end{equation}
where $K_0 \propto \id$ and $K_k \propto \sqrt{p} J_k$. However, according to the Wigner-Eckart theorem and the angular-momentum selection rules~\cite{sakurai2017modern, edmonds1957angular}, a rank-0 scalar operator cannot couple to a rank-1 vector operator within a scalar ($J=0$) sector. In essence, rotational symmetry prohibits any first-order linear coupling between the scalar signal and the vector noise. This selection rule acts as a physical barrier, ensuring that the leading-order noise amplitude is strictly orthogonal to the signal subspace.

Concretely, for the SU(2)-invariant singlet state $\ket{\Psi_0}$ used to probe the channel (see Theorem~\ref{thm:diamond_distance}), one has
\begin{equation}
    \langle \Psi_0 | (\id \otimes J_k) | \Psi_0 \rangle = 0.
\end{equation}
This confirms that the vector-type noise is strictly orthogonal to the scalar signal at the level of probability amplitudes.

To understand the impact of this orthogonality on the capacity scaling, consider the origin of the error term. In a generic scenario, the interference of leading-order noise amplitudes ($O(\sqrt{p})$) couples with the first-order corrections ($O(p)$) introduced by the degrading map (due to channel contraction). The resulting residual error therefore scales as $\text{Amplitude}(\sqrt{p}) \times \text{Correction}(p) \sim O(p^{3/2})$, see Appendix~\ref{app:degradability_calc}, which constitutes the fundamental barrier for non-symmetric channels.

This physical intuition is explicitly corroborated by our mathematical derivation in Appendix~\ref{app:degradability_calc}. As shown in Eq.~\eqref{eq:leadingCoef}, the leading-order term in the difference map expansion appears as $p^{3/2}$ with the coefficient $[\frac{1}{\gamma} - \frac{a}{2}]$. For the MLS channel, the underlying symmetry allows the degrading parameter $a$ to be chosen such that $\frac{1}{\gamma} - \frac{a}{2} = 0$, thereby perfectly canceling this $O(p^{3/2})$ contribution. 

Physically, this implies that the noise induced by the MLS channel acts as a geometric rotation into an orthogonal subspace rather than a bias within the original sector. Since the degrading map is also SU(2)-covariant, it can replicate this rotation. The leading non-vanishing error therefore originates from second-order processes—specifically, from the rank-0 (scalar) and rank-2 components appearing in products of the form $J_k J_l$—which scale naturally as $O(p^2)$.


\subsection{A General Sufficient Condition}
Motivated by the specific symmetries of MLS and GPC channels, we identify a general class of channels where the quadratic suppression of non-degradability is mathematically guaranteed. The assumptions below are satisfied, in particular, by the MLS and GPC families (see Appendix~\ref{app:degradability_calc} and \ref{app:gpc}), and they capture the common algebraic mechanism behind the $O(p^2)$ behavior.

We emphasize that the proposition below is not intended as a classification
result, but rather as a convenient abstraction that captures the algebraic
mechanism common to MLS- and GPC-type channels.

\begin{proposition}[A Sufficient Structural Condition for Quadratic Degradability]
\label{prop:rigorous_condition}
Let $\mathcal{N}_p: \mathcal{B}(\mathcal{H}_d) \to \mathcal{B}(\mathcal{H}_d)$ be a quantum channel defined by:
\begin{equation}
    \mathcal{N}_p(\rho) = (1-p)\rho + p \sum_{k=1}^{K} L_k \rho L_k^\dagger,
\end{equation}
where the operators $\{L_k\}_{k=1}^K$ are traceless ($\text{tr}(L_k)=0$) and form an orthonormal set in the Hilbert-Schmidt inner product (i.e., $\text{tr}(L_k^\dagger L_j) = \delta_{kj}$).
Suppose the channel satisfies the following normal eigen-invariance property:
 for all $k \in \{1,\dots,K\}$, $L_k$ is a common eigen-operator of both $\mathcal{N}_p$ and its adjoint $\mathcal{N}_p^\dagger$ with the same real eigenvalue $\lambda(p)$:
\begin{equation}
    \mathcal{N}_p(L_k) = \lambda(p) L_k \quad \text{and} \quad \mathcal{N}_p^\dagger(L_k) = \lambda(p) L_k,
\end{equation}
where the eigenvalue scales as $\lambda(p) = 1 - \gamma p + O(p^2)$ for some constant $\gamma > 0$.
Then, the channel is approximately degradable with parameter $\eta = O(p^2)$.
\end{proposition}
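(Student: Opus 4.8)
The plan is to mirror the structure of the MLS proof (Theorem~\ref{thm:main}), abstracting away the specific $SU(2)$ algebra and using only the three hypotheses: tracelessness, Hilbert--Schmidt orthonormality, and Normal Eigen-Invariance. First I would write down the complementary channel in block form exactly as in Eq.~\eqref{eq:Mc_matrix}: with Kraus operators $K_0=\sqrt{1-p}\,\id_d$ and $K_k=\sqrt{p}\,L_k$, the environment state $\mathcal{N}_p^c(\rho)$ is a $(K{+}1)\times(K{+}1)$ matrix with $(0,0)$ entry $(1-p)\tr(\rho)$, off-diagonal vector $[\mb{v}_p(\rho)]_k = \sqrt{p(1-p)}\,\tr(L_k^\dagger\rho)$, and inner block $[\mb{W}_p(\rho)]_{k,l}=p\,\tr(L_k\rho L_l^\dagger)$. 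Tracelessness is what guarantees there is no $O(\sqrt{p})$ cross term already ``built in'' in a problematic direction; orthonormality is what makes the inner block well-behaved (in particular $\tr(\mathcal{N}_p^c(\rho)) = \tr(\rho)$ uses $\sum_k\tr(L_k^\dagger L_k)=K$ together with the $(1-p)$ weight — actually one checks trace preservation of $\mathcal{N}_p$ directly from $\sum_k L_k^\dagger L_k$, but the block structure is what we need).

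Next I would propose the degrading map $\mathcal{D} := \mathcal{N}_s^c$ with $s = p + a p^2$, exactly as in the MLS case, and compute $\Psi := \mathcal{D}\circ\mathcal{N}_p$ by substituting $\mathcal{N}_p(\rho)$ into the $s$-parametrized block matrix. The key computation is the analogue of the ``scaling of expectation values'' step: I need $\tr(L_k^\dagger\,\mathcal{N}_p(\rho))$. Here is precisely where Normal Eigen-Invariance enters. By definition of the adjoint, $\tr(L_k^\dagger\,\mathcal{N}_p(\rho)) = \tr\big(\mathcal{N}_p^\dagger(L_k)^\dagger\,\rho\big) = \lambda(p)\,\tr(L_k^\dagger\rho)$, so the off-diagonal vector of $\Psi$ is $[\mb{v}_\Psi(\rho)]_k = \sqrt{s(1-s)}\,\lambda(p)\,\tr(L_k^\dagger\rho)$, replacing the MLS factor $(1-p_j)$ by the generic $\lambda(p)$. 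Likewise, the $(k,l)$ inner block of $\Psi$ involves $\tr\big(L_k\,\mathcal{N}_p(\rho)\,L_l^\dagger\big) = \tr\big(\mathcal{N}_p^\dagger(L_l^\dagger L_k)\,\rho\big)$; the eigen-operator hypothesis does not directly control $L_l^\dagger L_k$, but I only need that this block is $O(p)$ in both $\mathcal{N}_p^c$ and $\Psi$ so that their difference is $O(s-p)=O(p^2)$, which follows from boundedness of the $L_k$ alone.

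Then I would carry out the first-order cancellation. The critical difference entry is $\Delta_{0,k} = \big[\sqrt{p(1-p)} - \sqrt{s(1-s)}\,\lambda(p)\big]\tr(L_k^\dagger\rho)$. Expanding $\sqrt{s(1-s)} = \sqrt{p(1-p)}\,\big(1 + \tfrac{a}{2}p + O(p^2)\big)$ using $s-p = ap^2$, and $\lambda(p) = 1-\gamma p + O(p^2)$, the bracket becomes $\sqrt{p(1-p)}\,\big[1 - (1+\tfrac{a}{2}p)(1-\gamma p)\big] + O(p^{5/2}) = \sqrt{p}\,\big[(\gamma - \tfrac{a}{2})\,p\big] + O(p^{5/2})$, i.e.\ the $p^{3/2}$ coefficient is exactly $\tfrac{1}{\gamma}-\tfrac{a}{2}$ up to the common prefactor — matching Eq.~\eqref{eq:leadingCoef}. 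Choosing $a = 2/\gamma > 0$ (which is where $\gamma>0$ is used) kills the $p^{3/2}$ term, leaving $\Delta_{0,k} = O(p^{5/2})$. Combined with the $(0,0)$ entry being $(s-p)\tr(\rho) = ap^2\tr(\rho) = O(p^2)$ and every inner block being $O(p^2)$, the Choi matrix of $\Phi := \mathcal{N}_p^c - \Psi$ satisfies $\|\mathcal{J}(\Phi)\|_{\max} \le C p^2$, and Lemma~\ref{lemma:diamond_bound} gives $\|\Phi\|_\diamond \le d\,(K{+}1)^2\,\|\mathcal{J}(\Phi)\|_{\max} = O(p^2)$.

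The main obstacle I anticipate is being careful about what the inner $(k,l)$ block requires: unlike the off-diagonal vector, the eigen-invariance hypothesis does not directly diagonalize $\mathcal{N}_p$ on products $L_l^\dagger L_k$, so one must argue purely by order-counting in $p$ that the $(k,l)$ blocks of $\mathcal{N}_p^c$ and of $\Psi$ agree to $O(p)$ and hence cancel to $O(p^2)$; this is routine but needs the uniform boundedness of $\|\tr(L_k\,\sigma\,L_l^\dagger)\|$ over states $\sigma$, which holds since the $L_k$ are fixed finite-dimensional operators. A secondary subtlety is confirming there is no $O(\sqrt{p})$ contribution hiding in $\Psi$ itself from the $(0,0)$ or diagonal entries — tracelessness of the $L_k$ ensures $\tr(L_k^\dagger\,\mathcal{N}_p(\rho))$ carries no state-independent piece, so the only $\sqrt{p}$-scale object anywhere is the off-diagonal vector, whose leading behavior we have just cancelled. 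Once these two bookkeeping points are nailed down, the proof closes exactly along the MLS template.
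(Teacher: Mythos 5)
Your proposal follows essentially the same route as the paper's proof: the same degrading map $\mathcal{D}=\mathcal{N}_s^c$ with $s=p+ap^2$, the same block-by-block bound on the Choi matrix via Lemma~\ref{lemma:diamond_bound}, the same use of the adjoint relation $\tr(L_k^\dagger\,\mathcal{N}_p(\rho))=\lambda(p)\tr(L_k^\dagger\rho)$ to isolate the $p^{3/2}$ coefficient, and the same order-counting argument for the $(k,l)$ blocks (the paper phrases it via the continuity bound $\|\mathcal{N}_p(\rho)-\rho\|_1\le 2p$, you via uniform boundedness of the $L_k$ --- equivalent for this purpose). The one slip is your final choice of $a$: under the proposition's normalization $\lambda(p)=1-\gamma p+O(p^2)$, your own expansion correctly yields the $p^{3/2}$ coefficient $\gamma-\tfrac{a}{2}$, which forces $a=2\gamma$, not $a=2/\gamma$; you have imported the convention of Appendix~\ref{app:degradability_calc}, where $\gamma$ denotes $j(j+1)$ rather than the decay rate $1/[j(j+1)]$, and Eq.~\eqref{eq:leadingCoef} accordingly reads $\tfrac{1}{\gamma}-\tfrac{a}{2}$. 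With $a=2\gamma$ the argument closes exactly as you describe.
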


\begin{proof}
We construct a degrading map $\mathcal{D} = \mathcal{N}_s^c$ with $s = p + a p^2$. We analyze the difference map $\Phi = \mathcal{N}_p^c - \mathcal{D} \circ \mathcal{N}_p$ by bounding the max norm of its Choi matrix $J(\Phi)$ in the basis defined by the environment states $\{|0\rangle_E, |k\rangle_E\}$. According to Lemma 1, $\|\Phi\|_\diamond \le C_d \|J(\Phi)\|_{\max}$.

We analyze the entries of $J(\Phi)$ block by block:

\textbf{1. The diagonal block (0,0):}
The entry is determined by the trace preservation condition: $(1-p)\text{tr}(\rho) - (1-s)\text{tr}(\rho) = (s-p)\text{tr}(\rho) = a p^2 \text{tr}(\rho)$. Thus, the max norm of this block scales as $O(p^2)$.

\textbf{2. The off-diagonal blocks (0,k):}
The complementary channel contributes $\sqrt{1-p}\sqrt{p} \text{tr}(\rho L_k^\dagger)$. The composite map contributes $\sqrt{1-s}\sqrt{s} \text{tr}(\mathcal{N}_p(\rho) L_k^\dagger)$.
To evaluate the second term, we use the adjoint property. Since $\mathcal{N}_p^\dagger$ is a completely positive (and thus *-preserving) map and $\lambda(p)$ is real, the assumption $\mathcal{N}_p^\dagger(L_k) = \lambda(p)L_k$ implies $\mathcal{N}_p^\dagger(L_k^\dagger) = (\mathcal{N}_p^\dagger(L_k))^\dagger = \lambda(p)L_k^\dagger$.
Thus:
\begin{equation}
    \text{tr}(\mathcal{N}_p(\rho) L_k^\dagger) = \text{tr}(\rho \mathcal{N}_p^\dagger(L_k^\dagger)) = \lambda(p) \text{tr}(\rho L_k^\dagger).
\end{equation}
The coefficient of $\text{tr}(\rho L_k^\dagger)$ in the difference map is $\Delta(p) = \sqrt{p(1-p)} - \sqrt{s(1-s)}\lambda(p)$.
Expanding for small $p$ with $s=p+ap^2$:
\begin{subequations}
\begin{align}
    \sqrt{p(1-p)} &= \sqrt{p}\left(1 - \frac{p}{2} + O(p^2)\right), \\
    \sqrt{s(1-s)} &= \sqrt{p}\left(1 + \frac{a}{2}p + O(p^2)\right)\left(1 - \frac{p}{2} + O(p^2)\right) \nonumber \\
    &\approx \sqrt{p}\left(1 + \frac{a-1}{2}p\right).
\end{align}
\end{subequations}
Substituting $\lambda(p) = 1 - \gamma p + O(p^2)$:
\begin{align}
    \sqrt{s(1-s)}\lambda(p) &\approx \sqrt{p}\left(1 + \frac{a-1}{2}p\right)(1 - \gamma p) \nonumber \\
    &\approx \sqrt{p}\left(1 + \left(\frac{a-1}{2} - \gamma\right)p\right).
\end{align}
Subtracting these expressions, the difference is:
\begin{equation}
    \Delta(p) \approx \sqrt{p} \cdot p \left[ -\frac{1}{2} - \left(\frac{a}{2} - \frac{1}{2} - \gamma\right) \right] = p^{3/2}\left(\gamma - \frac{a}{2}\right).
\end{equation}
By explicitly choosing the degrading parameter $a = 2\gamma$, the $O(p^{3/2})$ term vanishes, leaving a remainder of order $O(p^{2.5})$.

\textbf{3. The error-error blocks (k,l):}
The entry corresponds to:
\begin{equation}
    D_{kl} = p \text{tr}(L_k \rho L_l^\dagger) - s \text{tr}(L_k \mathcal{N}_p(\rho) L_l^\dagger).
\end{equation}
Since $\mathcal{N}_p$ converges to the identity channel, we have the continuity bound $\|\mathcal{N}_p(\rho) - \rho\|_1 \le 2p$. Thus, we can write $\text{tr}(L_k \mathcal{N}_p(\rho) L_l^\dagger) = \text{tr}(L_k \rho L_l^\dagger) + \delta(\rho, p)$, where $|\delta(\rho, p)| \le C p$ for some constant $C$ depending only on the operator norms of $L_k$ and $L_l$, uniformly over all $\rho$.
Substituting $s = p + O(p^2)$, we obtain
\begin{align}
\begin{split}
    s \text{tr}(L_k \mathcal{N}_p(\rho) L_l^\dagger) 
    &= (p + O(p^2)) (\text{tr}(L_k \rho L_l^\dagger) + O(p)) \\
    &= p \text{tr}(L_k \rho L_l^\dagger) + O(p^2).
\end{split} 
\end{align}
The leading $O(p)$ terms cancel exactly, ensuring that these blocks are strictly bounded by $O(p^2)$.

\textbf{Discussion:}
With the choice $a=2\gamma$, all blocks of the Choi matrix $J(\Phi)$ are bounded by $O(p^2)$. Hence, $\|\Phi\|_\diamond = O(p^2)$.
This condition encompasses our main examples: for the MLS channel, $\gamma = \frac{1}{j(j+1)}$ leading to $a = \frac{2}{j(j+1)}$; for the GPC, $\gamma = \frac{d^2}{d^2-1}$ leading to $a = \frac{2d^2}{d^2-1}$.
\end{proof}

\section{Conclusion} \label{sec:conclusion}

This work began with the study of high-dimensional extensions of the qubit
depolarizing channel and ultimately led to the identification of a structural
mechanism that governs the quadratic suppression of non-degradability. Beyond
the MLS and GPC models, we showed that this behavior follows from two algebraic
conditions—geometric orthogonality and normal eigen-invariance—formalized in
Proposition~\ref{prop:rigorous_condition}. These properties ensure that the
first-order terms in the complementary channel can be cancelled by an
appropriately chosen degrading map, leading to an approximate degradability
parameter scaling as $\eta = O(p^2)$.

This perspective clarifies the common mechanism shared by physically distinct
noise models. The tracelessness of the error operators guarantees that the
signal and noise subspaces remain orthogonal at leading order, while the
eigen-invariance condition prevents the channel from mixing these subspaces
under its action. Together, these properties explain why both the continuous
$SU(2)$-covariant MLS channels and the discrete Weyl-covariant GPC channels
exhibit the same low-noise behavior.

The structural formulation suggested by our analysis may be useful in guiding
further study of symmetric noise models. For instance, channels whose Kraus
operators form orthogonal operator bases, or that arise from representations of
compact groups, naturally fit into this framework. It may also be interesting to
explore whether similar phenomena appear in large-spin limits or in certain
continuous-variable channels.

In summary, our results highlight an algebraic pattern underlying the robustness
of a broad class of symmetric quantum channels. By making explicit the
conditions that guarantee quadratic degradability, this work provides a unified
view of several previously separate examples and suggests directions for
extensions to other structured noise models.
Our analysis shows that the phenomenon observed in the qubit setting extends
to high-dimensional symmetric channels, suggesting that dimension plays
no essential role once the relevant algebraic constraints are present.

\appendices

\section{Algebraic Identities for High-Spin Operators} \label{app:lemmas}

For notational convenience in this appendix, we write
\begin{equation}
J_1 := J_x,\qquad 
J_2 := J_y,\qquad
J_3 := J_z,
\end{equation}
so that repeated indices $i,j,k \in \{1,2,3\}$ are summed over the Cartesian basis of $\mathfrak{su}(2)$.  

\vspace{1em}

The following identities concern the SU(2) generators acting on the spin-$j$ irreducible representation, obeying  
\begin{equation}
[J_a, J_b] = i \varepsilon_{abc} J_c,
\qquad 
\sum_{\ell=1}^3 J_\ell^2 = j(j+1)\,\id.
\end{equation}
They are used repeatedly in the derivation of the complementary channel and the perturbative analysis in Sec.~IV.  
For notational compactness, we also adopt the cyclic convention
\begin{equation}
J_{k+1} := J_{(k \bmod 3)+1},
\qquad 
J_{k+2} := J_{(k+1 \bmod 3)+1}.
\end{equation}

\begin{lemma} \label{lemma:algebra_identities}
For any spin $j$ and any fixed $k\in\{1,2,3\}$, we have the following three identities:
\begin{enumerate}
    \item $\displaystyle \sum_{\ell=1}^3 J_\ell\, J_k\, J_\ell = \big[j(j+1)-1\big]\, J_k$.
    \item $\displaystyle \sum_{\ell=1}^3 J_\ell\, J_k^2\, J_\ell 
    = \big[j(j+1)-3\big] J_k^2 + j(j+1)\,\id$.
    \item $\displaystyle \sum_{\ell=1}^3 J_\ell\, J_k\, J_{k+1}\, J_\ell
    = \big[j(j+1)-3\big] J_k J_{k+1} + i\, J_{k+2}$.
\end{enumerate}
\end{lemma}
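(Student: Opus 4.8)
The plan is to prove the three identities by a unified commutator-reduction strategy, relying only on $[J_a,J_b]=i\varepsilon_{abc}J_c$ and the Casimir relation $\sum_\ell J_\ell^2 = j(j+1)\,\id$, so that no explicit matrix elements are needed. The central computational device is the adjoint-action formula: for any operator $X$, $\sum_\ell J_\ell X J_\ell = \tfrac12\sum_\ell\big(J_\ell^2 X + X J_\ell^2\big) - \tfrac12\sum_\ell [J_\ell,[J_\ell,X]] = j(j+1)X - \tfrac12\,\mathrm{ad}_{\mb J}^2(X)$, where $\mathrm{ad}_{\mb J}^2(X):=\sum_\ell[J_\ell,[J_\ell,X]]$ is the quadratic Casimir acting in the adjoint representation. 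Thus each identity reduces to evaluating $\mathrm{ad}_{\mb J}^2$ on the relevant operator, and the whole lemma becomes a statement about how $J_k$, $J_k^2$, and $J_kJ_{k+1}$ transform under $\mathfrak{su}(2)$.

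First I would dispatch part (1): since $J_k$ transforms as a spin-$1$ (vector) operator, $\mathrm{ad}_{\mb J}^2(J_k) = 1\cdot(1+1)\,J_k = 2J_k$, giving $\sum_\ell J_\ell J_k J_\ell = j(j+1)J_k - J_k = [j(j+1)-1]J_k$. For readers who prefer a bare-hands check, one expands $\sum_\ell[J_\ell,[J_\ell,J_k]] = \sum_\ell[J_\ell, i\varepsilon_{\ell k m}J_m] = -\sum_\ell\varepsilon_{\ell k m}\varepsilon_{\ell m n}J_n = 2J_k$ using $\varepsilon_{\ell k m}\varepsilon_{\ell m n} = \delta_{kn}\delta_{mm} - \delta_{km}\delta_{mn} = 3\delta_{kn} - \delta_{kn} = 2\delta_{kn}$, wait—more carefully, $\sum_{\ell,m}\varepsilon_{\ell k m}\varepsilon_{\ell m n} = \sum_{\ell,m}\varepsilon_{k m \ell}\varepsilon_{n\ell m}$, which one evaluates by the standard $\varepsilon$-$\varepsilon$ contraction to $-2\delta_{kn}$, yielding the stated $+2J_k$. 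For parts (2) and (3) I would use the Leibniz rule for the adjoint action, $\mathrm{ad}_{\mb J}^2(XY) = \mathrm{ad}_{\mb J}^2(X)Y + X\,\mathrm{ad}_{\mb J}^2(Y) + 2\sum_\ell [J_\ell,X][J_\ell,Y]$, so that $\mathrm{ad}_{\mb J}^2(J_k^2) = 2\cdot(2J_k)J_k + 2\sum_\ell(i\varepsilon_{\ell k m}J_m)(i\varepsilon_{\ell k n}J_n) = 4J_k^2 - 2\sum_{m,n}(\delta_{mn}-\delta_{km}\delta_{kn})J_mJ_n = 4J_k^2 - 2\big(j(j+1)\id - J_k^2\big) = 6J_k^2 - 2j(j+1)\id$, which plugged into the adjoint-action formula reproduces identity (2). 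Identity (3) is handled the same way: $\mathrm{ad}_{\mb J}^2(J_kJ_{k+1}) = 2J_k J_{k+1} + 2J_kJ_{k+1} + 2\sum_\ell[J_\ell,J_k][J_\ell,J_{k+1}]$, and the cross term $\sum_\ell(i\varepsilon_{\ell k m}J_m)(i\varepsilon_{\ell(k+1)n}J_n) = -\sum_{m,n}(\delta_{k,k+1}\delta_{mn} - \delta_{kn}\delta_{k+1,m})J_mJ_n = J_{k+1}J_k = J_kJ_{k+1} - [J_k,J_{k+1}] = J_kJ_{k+1} - iJ_{k+2}$, so $\mathrm{ad}_{\mb J}^2(J_kJ_{k+1}) = 6J_kJ_{k+1} - 2iJ_{k+2}$, which gives identity (3) after substitution.

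The main obstacle is purely bookkeeping: keeping the Levi-Civita contractions and the cyclic index conventions ($k$, $k+1$, $k+2$ mod $3$) straight, especially in identity (3) where the cross term is not symmetric in $m,n$ and one must carefully track the ordering $J_{k+1}J_k$ versus $J_kJ_{k+1}$ and the resulting commutator correction $iJ_{k+2}$. A secondary subtlety worth a remark is justifying the Leibniz identity for $\mathrm{ad}_{\mb J}^2$, which follows from applying the ordinary derivation property of each $\mathrm{ad}_{J_\ell}$ twice and collecting terms; once that identity is in hand, parts (2) and (3) are immediate corollaries of part (1) together with the Casimir relation. No step presents a genuine conceptual difficulty, so I would present part (1) in full, state the Leibniz identity as an auxiliary fact, and then give parts (2) and (3) as short computations.
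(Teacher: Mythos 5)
Your proposal is correct, and all three final formulas check out, but it follows a genuinely different route from the paper's. The paper proceeds by a single commutator expansion, $J_\ell J_k = J_k J_\ell - [J_k,J_\ell]$, reduces each sum to the Casimir plus a one-commutator correction, and evaluates that correction case by case after fixing $k=1$ (part (2) is built on part (1), and part (3) is only asserted by analogy). You instead package everything into the uniform identity $\sum_\ell J_\ell X J_\ell = j(j+1)X - \tfrac12\sum_\ell[J_\ell,[J_\ell,X]]$ together with the Leibniz rule for the double adjoint action, so that each identity reduces to one $\varepsilon$--$\varepsilon$ contraction and the single input $\sum_\ell[J_\ell,[J_\ell,J_k]]=2J_k$, whose value $2=1(1+1)$ is transparently the spin-$1$ Casimir eigenvalue of a vector operator. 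This buys three things: a conceptual explanation of where the coefficients $-1$ and $-3$ come from (they are $-\tfrac12$ times adjoint-Casimir eigenvalues), a mechanical derivation of part (3) that the paper leaves to the reader, and manifest $k$-independence without the ``fix $k=1$ and appeal to symmetry'' step. The only blemish is presentational: in part (1) you first write the contraction $\sum_{\ell,m}\varepsilon_{\ell k m}\varepsilon_{\ell m n}$ with the wrong sign before correcting yourself mid-sentence to $-2\delta_{kn}$; in a final write-up you should state the contraction identity $\sum_\ell \varepsilon_{\ell ab}\varepsilon_{\ell cd}=\delta_{ac}\delta_{bd}-\delta_{ad}\delta_{bc}$ once and apply it cleanly, since the final value $+2J_k$ is what makes all three coefficients come out right.
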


\begin{proof}
\textbf{(1) Identity for $J_k$.}
Using the commutator expansion $J_\ell J_k = J_k J_\ell - i \varepsilon_{k\ell m} J_m$, we have
\begin{align}
\sum_{\ell} J_\ell J_k J_\ell
&= \sum_\ell (J_k J_\ell - [J_k,J_\ell]) J_\ell  \nonumber \\
&= J_k \sum_\ell J_\ell^2 - \sum_\ell (i \varepsilon_{k\ell m} J_m J_\ell).
\end{align}
Fixing $k=1$, we find $\sum_\ell i \varepsilon_{1\ell m} J_m J_\ell = i[J_3,J_2] = J_1$. By symmetry, this holds for any $k$. Thus,
\begin{equation}
\sum_{\ell} J_\ell J_k J_\ell = j(j+1) J_k - J_k = \big[j(j+1)-1\big] J_k.
\end{equation}

\textbf{(2) Identity for $J_k^2$.}
We expand using part (1):
\begin{align}
\sum_\ell J_\ell J_k^2 J_\ell
&= \sum_\ell J_\ell J_k (J_\ell J_k + [J_k,J_\ell]) \nonumber \\
&= \left(\sum_\ell J_\ell J_k J_\ell\right) J_k + \sum_\ell J_\ell J_k (i\varepsilon_{k\ell m}J_m).
\end{align}
The first term gives $[j(j+1)-1]J_k^2$. For the second term, fix $k=1$. Only $\ell=2,3$ contribute:
\begin{align}
&\sum_\ell J_\ell J_1 (i\varepsilon_{1\ell m}J_m) \nonumber\\
&= i(J_2 J_1 J_3 - J_3 J_1 J_2) \nonumber\\
&= i\Big[ (J_1 J_2 - iJ_3)J_3 - (J_1 J_3 + iJ_2)J_2 \Big] \nonumber\\
&= -J_1^2 + J_2^2 + J_3^2.
\end{align}
Using the Casimir identity, the total sum becomes
\begin{equation}
\big[j(j+1)-3\big] J_k^2 + j(j+1)\,\id.
\end{equation}

\textbf{(3) Cross-term identity.}
The result follows from repeated use of the commutation relations and the cyclic
index convention introduced above.
Similar algebraic manipulation yields
\begin{equation}
\sum_{\ell} J_\ell J_k J_{k+1} J_\ell = \big[j(j+1)-3\big] J_k J_{k+1} + i J_{k+2}.
\end{equation}
\end{proof}

\section{Detailed Calculations for Approximate Degradability} \label{app:degradability_calc}

In this appendix, we provide the detailed algebraic derivations supporting the proof of Theorem \ref{thm:main}, specifically the cancellation of the first-order error terms in the Choi matrix of the difference map.

\subsection{Matrix Representation of the Complementary Channel}
Recall the auxiliary parameters $p_j = \frac{p}{j(j+1)}$ and $c_j(p) = \sqrt{(1-p)p_j}$.
The complementary channel $\mc{M}^c_{j,p}(\rho)$ maps the system state $\rho$ to an environment state in $\mathcal{B}(\mathbb{C}^4)$.
Using the Kraus operators defined in Eq.~\eqref{eq:kraus_ops}, $K_0 = \sqrt{1-p}\,\id$ and $K_k = \sqrt{p_j}J_k$, the matrix elements of the output $M = \mc{M}^c_{j,p}(\rho)$ in the standard environment basis $\{\ket{0}, \ket{1}, \ket{2}, \ket{3}\}$ are given by:
\begin{subequations}
\begin{align}
    M_{0,0} &= \tr(K_0 \rho K_0^\dagger) = (1-p)\tr(\rho). \\
    M_{0,k} &= \tr(K_0 \rho K_k^\dagger) = \sqrt{1-p}\sqrt{p_j} \tr(\rho J_k) \nonumber \\
            &= c_j(p) \tr(J_k \rho). \\
    M_{k,l} &= \tr(K_k \rho K_l^\dagger) = p_j \tr(J_k \rho J_l).
\end{align}
\end{subequations}
Note that $M_{k,0} = M_{0,k}^*$ due to Hermiticity.

\subsection{The Composite Map \texorpdfstring{$\Psi$}{Psi}}

We construct the composite map $\Psi = \mc{M}^c_{j,s} \circ \mc{M}_{j,p}$ with the perturbed parameter $s = p + ap^2$. The output matrix $\Psi(\rho)$ follows the same structural form as above, but with parameter $s$ and the input state replaced by the channel output $\rho' = \mc{M}_{j,p}(\rho)$.

To evaluate $\Psi(\rho)$, we need the moments of the transformed state $\rho'$:
\begin{enumerate}
    \item \textbf{Trace Preservation:} $\tr(\rho') = \tr(\rho)$.
    \item \textbf{First Moment (Vector Part):}
    \begin{align}
       \tr(J_k \rho') &= \tr(J_k [(1-p)\rho + p_j \sum_{l} J_l \rho J_l]) \nonumber \\
       &= (1-p)\tr(J_k \rho) + p_j \sum_{l} \tr(J_l J_k J_l \rho).
    \end{align}
    Applying Lemma \ref{lemma:algebra_identities} (Part 1), we substitute $\sum_l J_l J_k J_l = [j(j+1)-1]J_k$:
    \begin{align}
       \tr(J_k \rho') &= (1-p)\tr(J_k \rho) + p_j [j(j+1)-1] \tr(J_k \rho) \nonumber \\
       &= \left( 1-p + p - p_j \right) \tr(J_k \rho) \nonumber \\
       &= (1-p_j) \tr(J_k \rho).
    \end{align}
    \item \textbf{Second Moment (Matrix Block):}
    The terms $\tr(J_k \rho' J_l)$ involve products of the form $s_j \cdot p_j \approx O(p^2)$. Since the difference map $\Phi$ subtracts two such terms, the leading contribution is dominated by $s-p = ap^2$, ensuring these blocks are naturally of order $O(p^2)$.
\end{enumerate}

\subsection{Detailed Taylor Expansion for the Difference Map \texorpdfstring{$\Phi$}{Phi}}

Let $\Phi = \mc{M}^c_{j,p} - \Psi$. We focus on the off-diagonal entries $\Delta_{0,k} = [\Phi(\rho)]_{0,k}$, given by
\begin{align}
    \Delta_{0,k} &= M_{0,k}^{(p)}(\rho) - M_{0,k}^{(s)}(\rho') \nonumber \\
    &= \left[ c_j(p) - c_j(s)(1-p_j) \right] \tr(J_k \rho).
\end{align}
Let $\gamma = j(j+1)$. We define the coefficient of interest as $\Delta_C := c_j(p) - c_j(s)(1-p_j)$ and expand it in powers of $p$.

\textbf{1) Expansion of $c_j(p)$:}
\begin{align}
    c_j(p) &= \sqrt{\frac{p(1-p)}{\gamma}} = \frac{\sqrt{p}}{\sqrt{\gamma}} (1-p)^{1/2} \nonumber \\
    &= \frac{\sqrt{p}}{\sqrt{\gamma}} \left( 1 - \frac{p}{2} - \frac{p^2}{8} + O(p^3) \right).
\end{align}

\textbf{2) Expansion of $c_j(s)$:}
With $s = p + ap^2$, we have $\sqrt{s} = \sqrt{p}(1 + ap/2 + O(p^2))$ and $\sqrt{1-s} \approx 1 - p/2$. Thus,
\begin{align}
    c_j(s) &= \frac{1}{\sqrt{\gamma}} \sqrt{s(1-s)} \nonumber \\
    &\approx \frac{\sqrt{p}}{\sqrt{\gamma}} \left( 1 + \frac{a-1}{2}p \right) + O(p^{5/2}).
\end{align}

\textbf{3) Composite term and difference:}
Multiplying by $(1-p_j) = (1 - p/\gamma)$:
\begin{align}
    c_j(s)(1-p_j) &\approx \frac{\sqrt{p}}{\sqrt{\gamma}} \left( 1 + \frac{a-1}{2}p \right) \left( 1 - \frac{p}{\gamma} \right) \nonumber \\
    &= \frac{\sqrt{p}}{\sqrt{\gamma}} \left[ 1 + \left( \frac{a-1}{2} - \frac{1}{\gamma} \right) p \right] + O(p^{5/2}).
\end{align}
Subtracting this from $c_j(p)$, we define $\Delta_C := c_j(p) - c_j(s)(1-p_j)$. We have
\begin{align}\label{eq:leadingCoef}
    \Delta_C &\approx \frac{\sqrt{p}}{\sqrt{\gamma}} \Bigg[ \left( 1 - \frac{p}{2} \right) - \left( 1 + \left( \frac{a-1}{2} - \frac{1}{\gamma} \right) p \right) \Bigg] \nonumber \\
    &= \frac{p^{3/2}}{\sqrt{\gamma}} \left[ -\frac{1}{2} - \frac{a-1}{2} + \frac{1}{\gamma} \right] \nonumber \\
    &= \frac{p^{3/2}}{\sqrt{\gamma}} \left[ \frac{1}{\gamma} - \frac{a}{2} \right].
\end{align}
Vanishing of the $O(p^{3/2})$ term requires $a = 2/\gamma = 2/[j(j+1)]$.


\section{Explicit Verification of the Vanishing Local Angular Momentum Expectation}
\label{app:singlet-verification}

This calculation provides a concrete illustration of the selection-rule argument discussed in Sec.~\ref{subsec:selection}.
We provide a direct calculation showing that
\begin{equation}
\langle\Psi_0|(J_k \otimes \id)|\Psi_0\rangle = 0, \qquad k\in\{x,y,z\}.
\end{equation}
Here $|\Psi_0\rangle$ is the spin-$j$ singlet state (the unique $J_{\mathrm{tot}}=0$ state in $j\otimes j$), given by
\begin{equation}
|\Psi_0\rangle = \frac{1}{\sqrt d}\sum_{m=-j}^{j} (-1)^{j-m} |j,m\rangle \otimes |j,-m\rangle,
\end{equation}
where $d=2j+1$.

\subsection*{A. The case $k=z$}
Using $J_z|j,m\rangle=m|j,m\rangle$, we compute
\begin{equation}
\begin{aligned}
\langle\Psi_0|(J_z\otimes\id)|\Psi_0\rangle
&= \frac{1}{d}\sum_{m,m'=-j}^j 
(-1)^{(j-m)+(j-m')} m \\
&\quad \times \langle j,m'|j,m\rangle \langle j,-m'|j,-m\rangle \\
&= \frac{1}{d}\sum_{m=-j}^{j} m = 0,
\end{aligned}
\end{equation}
because the set $\{-j,-j+1,\ldots,j\}$ is symmetric around zero. Note that we used the orthogonality $\langle j,m'|j,m\rangle = \delta_{m,m'}$ and $\langle j,-m'|j,-m\rangle = \delta_{m',m}$.

\subsection*{B. The cases $k=x,y$ via ladder operators}

Recall $J_\pm = J_x \pm i J_y$, with
\begin{equation}
J_\pm |j,m\rangle = \sqrt{(j\mp m)(j\pm m +1)}\, |j,m\pm 1\rangle.
\end{equation}
Consider the expectation value of $J_+ \otimes \id$. Namely, 
\begin{align}
&\langle\Psi_0|(J_+\otimes\id)|\Psi_0\rangle \nonumber\\
&= \frac{1}{d} \sum_{m,m'} (-1)^{2j-m-m'} C_{j,m}^+ \nonumber\\
&\quad \times \underbrace{\langle j,m' | j,m+1 \rangle}_{\delta_{m', m+1}} \underbrace{\langle j,-m' | j,-m \rangle}_{\delta_{m', m}},
\end{align}
where $C_{j,m}^+$ is the coefficient from the ladder operator. The product of delta functions implies $m+1 = m$, which is impossible. Thus, every term in the sum vanishes individually:
\begin{equation}
\langle\Psi_0|(J_+\otimes\id)|\Psi_0\rangle = 0.
\end{equation}
By Hermitian conjugation, $\langle\Psi_0|(J_-\otimes\id)|\Psi_0\rangle = 0$.

Using $J_x = (J_+ + J_-)/2$ and $J_y = (J_+ - J_-)/2i$, the vanishing of $\langle\Psi_0|(J_{\pm}\otimes\id)|\Psi_0\rangle$ directly implies
\begin{equation}
\langle\Psi_0|(J_x\otimes\id)|\Psi_0\rangle
= \langle\Psi_0|(J_y\otimes\id)|\Psi_0\rangle = 0.
\end{equation}

\subsection*{C. Representation-theoretic explanation}

Since $|\Psi_0\rangle$ is the unique rotationally invariant state ($J_{\mathrm{tot}}=0$), it satisfies
\begin{equation}
(J_k\otimes\id + \id\otimes J_k)\,|\Psi_0\rangle = 0.
\end{equation}
Taking the expectation value with respect to $\langle\Psi_0|$ gives
\begin{equation}
\langle\Psi_0|(J_k\otimes\id)|\Psi_0\rangle
= - \langle\Psi_0|(\id\otimes J_k)|\Psi_0\rangle.
\end{equation}
Due to the exchange symmetry of the singlet state (up to a phase factor that cancels in the expectation value) and the invariance of the definition of subsystems, the local expectation values must be identical:
\begin{equation}
\langle\Psi_0|(J_k\otimes\id)|\Psi_0\rangle = \langle\Psi_0|(\id\otimes J_k)|\Psi_0\rangle.
\end{equation}
Combining these equations implies that both expectation values must vanish.

\section{Comparative Analysis: The Generalized Pauli Channel} \label{app:gpc}

As in Section~\ref{sec:main_result}, all comparisons are understood at the level of Choi matrix entries, equivalently by applying the difference map to Weyl basis operators.

To highlight the universality of the $O(p^2)$ degradability scaling in high-dimensional symmetric channels, it is instructive to explicitly analyze the standard generalization of the depolarizing channel to $d \ge 3$: the generalized Pauli channel. While prior work has suggested that the extension of approximate degradability to higher dimensions is straightforward, explicit calculations provided here reveal that the GPC also obeys the $a_{opt} \approx 2$ rule.

\subsection{Setup and Definitions}
The GPC is constructed using the discrete Weyl-Heisenberg displacement operators. Let $\{\ket{k}\}_{k=0}^{d-1}$ be the computational basis of $\mathcal{H}_d$. The shift operator $X$ and clock operator $Z$ are defined as:
\begin{equation}
    X\ket{k} = \ket{k+1~(\text{mod}~d)}, \quad Z\ket{k} = \omega^k \ket{k},
\end{equation}
where $\omega = e^{2\pi i / d}$. The set of $d^2$ operators $W_{m,n} = X^m Z^n$ for $m,n \in \{0, \dots, d-1\}$ forms an orthogonal unitary basis.
The $d$-dimensional depolarizing channel $\mc{N}_p^{\text{GPC}}$ is defined as
\begin{equation}
    \mc{N}_p^{\text{GPC}}(\rho) = (1-p)\rho + \frac{p}{d^2-1} \sum_{(m,n) \ne (0,0)} W_{m,n} \rho W_{m,n}^\dagger.
\end{equation}
Unlike the MLS channel, which requires only 3 generators, the GPC sums over $d^2-1$ error terms. Consequently, the complementary channel $\mc{N}_p^c$ maps to an environment of dimension $d_E = d^2$.

Equivalently, $\mc{N}_p^{\text{GPC}}$ admits the Kraus representation
$K_0=\sqrt{1-p}\,\id$ and $K_{m,n}=\sqrt{q}\,W_{m,n}$ for $(m,n)\neq(0,0)$,
so that the identity--error cross terms in the complementary channel
carry amplitude $\sqrt{(1-p)q}$.

\subsection{Approximate Degradability for \texorpdfstring{$d > 3$}{d > 3}}

We now explicitly derive the degradability parameter for $\mc{N}_p^{\text{GPC}}$. Following the ansatz used for qubits, we construct a degrading map $\mc{D} = (\mc{N}_s^{\text{GPC}})^c$ with a perturbed parameter $s = p + ap^2$. The composite map is $\Psi = \mc{D} \circ \mc{N}_p^{\text{GPC}}$. Accordingly, we define the difference map $\Phi := (\mc{N}_p^{\text{GPC}})^c - \Psi$.

We examine the condition for canceling the first-order error terms in the difference map $\Phi$. The relevant matrix elements in the Choi representation involve the cross-terms between the identity (signal) and the error operators $W_{m,n}$. For the target complementary channel, the amplitude of the $(m,n)$-th error term is proportional to $\sqrt{p}$. Specifically, let $q = p/(d^2-1)$, the coefficient is $\sqrt{q(1-p)}$. For the composite map, the channel $\mc{N}_p^{\text{GPC}}$ first shrinks the off-diagonal elements (in the Weyl basis) by a depolarization factor $f_d(p) = 1 - \frac{d^2}{d^2-1}p$. 
This is a standard property of the $d$-dimensional depolarizing (generalized Pauli) channel.
The degrading map then applies a scaling of $\sqrt{s/(d^2-1)}$.

At the level of the leading Choi cross-terms between the identity component and Weyl error operators, cancellation of the $O(p^{1/2})$ contributions requires
\begin{equation}
    \sqrt{p} \approx \sqrt{s} \left( 1 - \frac{d^2}{d^2-1}p \right).
\end{equation}
Substituting $s = p + ap^2$ and expanding to $O(p^{3/2})$, we have
\begin{align}
    \sqrt{p} &\approx \sqrt{p}\left(1 + \frac{a}{2}p\right) \left( 1 - \frac{d^2}{d^2-1}p \right) \nonumber \\
    &\approx \sqrt{p} \left[ 1 + \left( \frac{a}{2} - \frac{d^2}{d^2-1} \right)p \right].
\end{align}
All remaining Choi blocks contribute at most $O(p^2)$, as in
Section~\ref{sec:main_result}; hence the leading constraint arises
entirely from the identity--error cross terms.


For the $O(p^{3/2})$ error term to vanish, the term in the brackets must be zero. Solving for $a$, we find the optimal degrading parameter for arbitrary dimension $d$. Namely, 
\begin{equation}
    a_{opt}(d) = \frac{2d^2}{d^2-1}.
\end{equation}
For the qubit case ($d=2$) this recovers the known result $a=8/3$ \cite{leditzky2018prl}. For $d=3$ (qutrit), $a=9/4$, and in the large-$d$ limit, $a \to 2$. This derivation confirms that the GPC is indeed $O(p^2)$-approximately degradable for any finite dimension $d$.

\section{Proof of Covariance for MLS Channels} \label{app:covariance}

In this appendix, we explicitly verify the covariance condition stated in
\eqref{eq:MLS-covariance}:
\begin{equation}
    \mc{M}_{j,p}(U_g \rho U_g^\dagger) = U_g \mc{M}_{j,p}(\rho) U_g^\dagger, \quad \forall g \in \text{SU}(2),
\end{equation}
where $U_g$ denotes the spin-$j$ irreducible representation of $\text{SU}(2)$ on $\mathcal{H}_d$.

Recall the definition of the MLS channel:
\begin{equation}
    \mc{M}_{j,p}(\rho) = (1-p)\rho + \frac{p}{j(j+1)} \sum_{k=1}^3 J_k \rho J_k.
\end{equation}
The identity component $(1-p)\rho$ trivially satisfies covariance. Namely, 
\begin{equation}
    (1-p)(U_g \rho U_g^\dagger) = U_g [(1-p)\rho] U_g^\dagger.
\end{equation}
Thus, it suffices to prove the covariance of the Landau-Streater map $\mc{L}_j$:
\begin{equation}
    \sum_{k=1}^3 J_k (U_g \rho U_g^\dagger) J_k = U_g \left( \sum_{k=1}^3 J_k \rho J_k \right) U_g^\dagger.
\end{equation}

\textit{Proof:} Consider the left-hand side (LHS) of the summation. We insert the identity $\id = U_g^\dagger U_g$ between the operators:
\begin{align}
    \text{LHS} &= \sum_{k=1}^3 J_k U_g \rho U_g^\dagger J_k \nonumber \\
    &= \sum_{k=1}^3 (U_g U_g^\dagger) J_k U_g \rho U_g^\dagger J_k (U_g^\dagger U_g) \nonumber \\
    &= U_g \left( \sum_{k=1}^3 (U_g^\dagger J_k U_g) \rho (U_g^\dagger J_k U_g) \right) U_g^\dagger.
\end{align}
The set of angular momentum generators $\mathbf{J} = (J_1, J_2, J_3)$ transforms as a vector operator under the adjoint action of SU(2). Specifically, for any $g \in \text{SU}(2)$, there exists a $3 \times 3$ orthogonal rotation matrix $R(g)$ such that:
\begin{equation}
    U_g^\dagger J_k U_g = \sum_{l=1}^3 R_{kl} J_l.
\end{equation}
Substituting this transformation rule into the expression:
\begin{align}
    \text{LHS} &= U_g \left( \sum_{k=1}^3 \left( \sum_{l=1}^3 R_{kl} J_l \right) \rho \left( \sum_{m=1}^3 R_{km} J_m \right) \right) U_g^\dagger \nonumber \\
    &= U_g \left( \sum_{l,m=1}^3 \left( \sum_{k=1}^3 R_{kl} R_{km} \right) J_l \rho J_m \right) U_g^\dagger.
\end{align}
Since $R(g)$ is an orthogonal matrix ($R^{\mathsf{T}} R = \id_3$), its columns are orthonormal:
\begin{equation}
    \sum_{k=1}^3 R_{kl} R_{km} = (R^{\mathsf{T}} R)_{lm} = \delta_{lm}.
\end{equation}
Using this orthogonality relation, the double sum simplifies:
\begin{align}
    \text{LHS} &= U_g \left( \sum_{l,m=1}^3 \delta_{lm} J_l \rho J_m \right) U_g^\dagger \nonumber \\
    &= U_g \left( \sum_{l=1}^3 J_l \rho J_l \right) U_g^\dagger \nonumber \\
    &= U_g \mc{L}_j(\rho) U_g^\dagger.
\end{align}
This confirms that the noise term transforms covariantly, completing the proof. \hfill $\blacksquare$

\bibliographystyle{IEEEtran}

\balance

\bibliography{main} 

\end{document}